\newenvironment{pic}[1][]
{\begin{aligned}\begin{tikzpicture}[#1]}
{\end{tikzpicture}\end{aligned}}
\newcommand{\edges}[1][]%
{
}
\theoremstyle{plain}
\newtheorem{theorem}{Theorem}
\newtheorem{lemma}[theorem]{Lemma}
\newtheorem{corollary}[theorem]{Corollary}
\theoremstyle{definition} 
\newtheorem{definition}[theorem]{Definition} 
\newtheorem{example}[theorem]{Example} 
\newtheorem*{conj*}{Conjecture}
\newtheorem{remark}{Remark}
\newcommand{\tr}{\text{tr}}
\def\calign@preamble{%
   &\hfil\strut@
    \setboxz@h{\@lign$\m@th\displaystyle{##}$}%
    \ifmeasuring@\savefieldlength@\fi
    \set@field
    \hfil
    \tabskip\alignsep@
}
\let\cmeasure@\measure@
\patchcmd\cmeasure@{\divide\@tempcntb\tw@}{}{}{}
\patchcmd\cmeasure@{\divide\@tempcntb\tw@}{}{}{}
\patchcmd\cmeasure@{\ifodd\maxfields@
  \global\advance\maxfields@\@ne
  \fi}{}{}{}    
\newcommand\tinymatrix[1]
\renewcommand\thickspace{\kern2pt} \scriptstyle\begin{smallmatrix} #1 \end{smallmatrix} \hspace{-2pt} \right)}
\newcommand\ignore[1]{}
\newcommand\cat[1]{\ensuremath{\mathbf{#1}}}
\renewcommand\dag{\ensuremath{\dagger}}
\newcommand\id[1][]{\ensuremath{\mathrm{id}_{#1}}}
\newcommand\C{\ensuremath{\mathbb{C}}}
\newcommand{\inprod}[2]{\ensuremath{\langle #1\hspace{0.5pt}|\hspace{0.5pt}#2 \rangle}}
\newcommand\pdag{{\phantom{\dagger}}}
\def\arraystretch{1.0}
\newcommand\grid[1]{\ensuremath{\def\arraystretch{1.4}\begin{array}{|c|c|c|c|c|c|c|c|c|}\hline#1\\\hline\end{array}}}
\newcommand\super[2]{\stackrel{\makebox[0pt]{\smash{\tiny #1}}}{#2}}
\newcommand\bra[1]{\langle #1|}
\newcommand\ket[1]{{|} #1 \rangle}
\newcommand\braket[2]{\langle #1 | #2 \rangle}
\newcommand\ketbra[2]{|#1 \rangle \hspace{-1pt} \langle #2 |}
\newcounter{jamiecomment}
\tikzstyle{dc}   = [circle, minimum width=8pt, draw, inner sep=0pt, path picture={\draw (path picture bounding box.south east) -- (path picture bounding box.north west) (path picture bounding box.south west) -- (path picture bounding box.north east);}]
\tikzstyle{dp}   = [circle, minimum width=8pt, draw, inner sep=0pt, path picture={\draw (path picture bounding box.west) -- (path picture bounding box.north) (path picture bounding box.south west) -- (path picture bounding box.north) (path picture bounding box.south west) -- (path picture bounding box.north east) (path picture bounding box.north east) -- (path picture bounding box.south) (path picture bounding box.south) -- (path picture bounding box.east);}]
\tikzstyle{ls} = [dc,scale=0.65]
\tikzstyle{ls'} = [dp,scale=1.3]
\tikzstyle{dpt}  = [circle, minimum width=8pt, draw, inner sep=0pt, path picture={\draw (path picture bounding box.south) -- (path picture bounding box.north) (path picture bounding box.west) -- (path picture bounding box.east);}]
\tikzstyle{agg} = [dpt,scale=0.65]
\renewenvironment{pic}[1][]
{\begin{aligned}\begin{tikzpicture}[#1]}
{\end{tikzpicture}\end{aligned}}
\def\calign@preamble{%
   &\hfil\strut@
    \setboxz@h{\@lign$\m@th\displaystyle{##}$}%
    \ifmeasuring@\savefieldlength@\fi
    \set@field
    \hfil
    \tabskip\alignsep@
}
\let\cmeasure@\measure@
\patchcmd\cmeasure@{\divide\@tempcntb\tw@}{}{}{}
\patchcmd\cmeasure@{\divide\@tempcntb\tw@}{}{}{}
\patchcmd\cmeasure@{\ifodd\maxfields@
  \global\advance\maxfields@\@ne
  \fi}{}{}{}    
    \gdef\node@@on@layer{%
      \setbox\tikz@tempbox=\hbox\bgroup\pgfonlayer{#1}\unhbox\tikz@tempbox\endpgfonlayer\egroup}
\def\node@on@layer{\aftergroup\node@@on@layer}
\def\thickness{0.7pt}
\tikzstyle{oldmorphism}=[minimum width=30pt, minimum height=16pt, draw, font=\small, inner sep=0pt, fill=white, line width=\thickness]
\tikzstyle{cross}=[preaction={draw=white, -, line width=10pt}]
\tikzstyle{braid}=[double=black, line width=3*\thickness, double distance=\thickness, white]
\tikzstyle{string}=[line width=\thickness]
\tikzstyle{scalar}=[circle, inner sep=0pt, minimum width=15pt, draw, line width=\thickness]
\tikzstyle{dot}=[circle, draw=black, fill=black!25, inner sep=.4ex, line width=\thickness, node on layer=foreground]
\tikzstyle{blackdot}=[circle, draw=black, fill=black!35, inner sep=.5ex, line width=\thickness, node on layer=foreground]
\tikzstyle{whitedot}=[circle, draw=black, fill=white, inner sep=.5ex, line width=\thickness, node on layer=foreground]
\tikzstyle{reddot}=[circle, draw=black, fill=red, inner sep=.4ex, line width=\thickness, node on layer=foreground]
\tikzstyle{bluedot}=[circle, draw=black, fill=blue, inner sep=.4ex, line width=\thickness, node on layer=foreground]
\tikzstyle{greendot}=[circle, draw=black, fill=green, inner sep=.4ex, line width=\thickness, node on layer=foreground]
\tikzstyle{lsdot}=[circle, draw=black, fill=white, inner sep=.4ex, line width=\thickness, node on layer=foreground, path picture={\draw (path picture bounding box.south east) -- (path picture bounding box.north west) (path picture bounding box.south west) -- (path picture bounding box.north east);}]
\tikzstyle{lssdot}=[circle, draw=black, fill=white, inner sep=.4ex, line width=\thickness, node on layer=foreground, path picture={\draw (path picture bounding box.south) -- (path picture bounding box.north) (path picture bounding box.west) -- (path picture bounding box.east);}]
\tikzstyle{mixedmorphism}=[morphism, minimum width=30pt, minimum height=16pt, draw, font=\small, inner sep=0pt, fill=white, line width=\thickness,rounded corners=1ex]
\tikzstyle{thick}=[line width=\thickness]
\tikzstyle{tiny}=[font=\tiny]
\tikzset{arrow/.style={decoration={
    markings,
    mark=at position #1 with \arrow{thickarrow}},
    postaction=decorate}
}
\tikzset{reverse arrow/.style={decoration={
    markings,
    mark=at position #1 with \arrow{reversethickarrow}},
    postaction=decorate}
}
\newif\ifblack\pgfkeys{/tikz/black/.is if=black}
\newif\ifwedge\pgfkeys{/tikz/wedge/.is if=wedge}
\newif\ifvflip\pgfkeys{/tikz/vflip/.is if=vflip}
\newif\ifhflip\pgfkeys{/tikz/hflip/.is if=hflip}
\newif\ifhvflip\pgfkeys{/tikz/hvflip/.is if=hvflip}
\newif\ifconnectnw\pgfkeys{/tikz/connect nw/.is if=connectnw}
\newif\ifconnectne\pgfkeys{/tikz/connect ne/.is if=connectne}
\newif\ifconnectsw\pgfkeys{/tikz/connect sw/.is if=connectsw}
\newif\ifconnectse\pgfkeys{/tikz/connect se/.is if=connectse}
\newif\ifconnectn\pgfkeys{/tikz/connect n/.is if=connectn}
\newif\ifconnects\pgfkeys{/tikz/connect s/.is if=connects}
\newif\ifconnectnwf\pgfkeys{/tikz/connect nw >/.is if=connectnwf}
\newif\ifconnectnef\pgfkeys{/tikz/connect ne >/.is if=connectnef}
\newif\ifconnectswf\pgfkeys{/tikz/connect sw >/.is if=connectswf}
\newif\ifconnectsef\pgfkeys{/tikz/connect se >/.is if=connectsef}
\newif\ifconnectnf\pgfkeys{/tikz/connect n >/.is if=connectnf}
\newif\ifconnectsf\pgfkeys{/tikz/connect s >/.is if=connectsf}
\newif\ifconnectnwr\pgfkeys{/tikz/connect nw </.is if=connectnwr}
\newif\ifconnectner\pgfkeys{/tikz/connect ne </.is if=connectner}
\newif\ifconnectswr\pgfkeys{/tikz/connect sw </.is if=connectswr}
\newif\ifconnectser\pgfkeys{/tikz/connect se </.is if=connectser}
\newif\ifconnectnr\pgfkeys{/tikz/connect n </.is if=connectnr}
\newif\ifconnectsr\pgfkeys{/tikz/connect s </.is if=connectsr}
\tikzset{keylengthnw/.initial=\connectheight}
\tikzset{keylengthn/.initial =\connectheight}
\tikzset{keylengthne/.initial=\connectheight}
\tikzset{keylengthsw/.initial=\connectheight}
\tikzset{keylengths/.initial =\connectheight}
\tikzset{keylengthse/.initial=\connectheight}
\tikzset{connect nw length/.style={connect nw=true, keylengthnw={#1}}}
\tikzset{connect n length/.style ={connect n =true, keylengthn ={#1}}}
\tikzset{connect ne length/.style={connect ne=true, keylengthne={#1}}}
\tikzset{connect sw length/.style={connect sw=true, keylengthsw={#1}}}
\tikzset{connect s length/.style ={connect s =true, keylengths ={#1}}}
\tikzset{connect se length/.style={connect se=true, keylengthse={#1}}}
\tikzset{connect nw < length/.style={connect nw <=true, keylengthnw={#1}}}
\tikzset{connect n < length/.style ={connect n <=true,  keylengthn ={#1}}}
\tikzset{connect ne < length/.style={connect ne <=true, keylengthne={#1}}}
\tikzset{connect sw < length/.style={connect sw <=true, keylengthnw={#1}}}
\tikzset{connect s < length/.style ={connect s <=true,  keylengths ={#1}}}
\tikzset{connect se < length/.style={connect se <=true, keylengthse={#1}}}
\tikzset{connect nw > length/.style={connect nw >=true, keylengthnw={#1}}}
\tikzset{connect n > length/.style ={connect n >=true,  keylengthn ={#1}}}
\tikzset{connect ne > length/.style={connect ne >=true, keylengthne={#1}}}
\tikzset{connect sw > length/.style={connect sw >=true, keylengthsw={#1}}}
\tikzset{connect s > length/.style ={connect s >=true,  keylengths ={#1}}}
\tikzset{connect se > length/.style={connect se >=true, keylengthse={#1}}}
\newlength\morphismheight
\newlength\minimummorphismwidth
\newlength\stateheight
\newlength\minimumstatewidth
\newlength\connectheight
\tikzset{width/.initial=\minimummorphismwidth}
  \let\thickness=\pgfmathresult
\tikzset{forward arrow style/.style={every to/.style, decoration={
    markings,
    mark=at position 0.5 with \arrow{thickarrow}},
    postaction=decorate}}
\tikzset{reverse arrow style/.style={every to/.style, decoration={
    markings,
    mark=at position 0.5 with \arrow{reversethickarrow}},
    postaction=decorate}}
\newcommand{\tinymult}[1][dot]{
\smash{\raisebox{-2pt}{\hspace{-5pt}\ensuremath{\begin{pic}[scale=0.4,string,yscale=-1]
    \node (0) at (0,0) {};
    \node[#1, inner sep=1.5pt] (1) at (0,0.55) {};
    \node (2) at (-0.5,1) {};
    \node (3) at (0.5,1) {};
    \draw (0.center) to (1.center);
    \draw (1.center) to [out=left, in=down, out looseness=1.5] (2.center);
    \draw (1.center) to [out=right, in=down, out looseness=1.5] (3.center);
\end{pic}
}\hspace{-3pt}}}}
\newcommand{\tinyleftdivide}[1][dot]{
\smash{\raisebox{-2pt}{\hspace{-5pt}\ensuremath{\begin{pic}[scale=0.4,string]
    \node (0) at (0,0) {};
    \node[#1,scale=0.9, inner sep=1.5pt] (1) at (0,0.55) {};
    \node[blackdot,scale=0.5] (2) at (-0.7,1) {};
    \node (4) at (-1.3,0){};
    \node (3) at (0.5,1.2) {};
    \draw (0.center) to (1.center);
    \draw (1.center) to [out=left, in=right, out looseness=1] (2.center);
    \draw (1.center) to [out=right, in=down, out looseness=1.5] (3.center);
    \draw[out=up,in=left,looseness=1.5] (4.center) to (2.center);
\end{pic}
}\hspace{-1pt}}}}
\newcommand{\tinyspider}[1][dot]{
\smash{\raisebox{-2pt}{\hspace{-5pt}\ensuremath{\begin{pic}[scale=0.4,string,yscale=1]
    \node (0) at (0,0) {};
    \node[#1, inner sep=1.5pt] (1) at (0,0.55) {};
    \node (2) at (-0.5,1) {};
    \node (3) at (0.5,1) {};
    \node (4) at (-0.5,0) {};
    \node (5) at (0.5,0) {};
    \draw (4.center) to [out=up, in=left, out looseness=1.5](1.center);
    \draw (5.center) to [out=up, in=right, out looseness=1.5](1.center);
    \draw (1.center) to [out=left, in=down, out looseness=1.5] (2.center);
    \draw (1.center) to [out=right, in=down, out looseness=1.5] (3.center);
    \draw[densely dotted] (-0.365,0.1) to (0.5,0.1){};     
    \draw[densely dotted] (-0.365,0.9) to (0.5,0.9){};
\end{pic}
}\hspace{-3pt}}}}
\newcommand{\tinyhandle}[1][dot]{\raisebox{-2pt}{\ensuremath{\hspace{-3pt}\begin{pic}[scale=0.4,string]
        \node (0) at (0,0) {};
        \node[dot, inner sep=1.0pt] (1) at (0,0.3) {};
        \node[dot, inner sep=1.0pt] (2) at (0,0.7) {};
        \node (3) at (0,1) {};
        \draw (0.center) to (1.center);
        \draw (2.center) to (3.center);
        \draw[in=180, out=180, looseness=2] (1.center) to (2.center);
        \draw[in=0, out=0, looseness=2] (1.center) to (2.center);
\end{pic}\hspace{-1pt}}}}
\renewcommand\dag{\ensuremath{\dagger}}
\begin{document}

\title{Constructing Mutually Unbiased Bases \\ from Quantum Latin Squares}
\author{Benjamin Musto
\institute{Department of Computer Science\\ University of Oxford}
\email{benjamin.musto@cs.ox.ac.uk}
}
\date{\today}
\maketitle

\begin{abstract}
We introduce \textit{orthogonal quantum Latin squares}, which restrict to traditional orthogonal Latin squares, and investigate their application in quantum information science. We use quantum Latin squares  to build  maximally entangled bases, and  show how mutually unbiased maximally entangled bases can be constructed  in square dimension from orthogonal quantum Latin squares. We also compare our construction to an existing construction due to Beth and Wocjan~\cite{ortho1} and show that ours is strictly more general. \end{abstract}
\section{Introduction}
In this paper we introduce a notion of orthogonality between \textit{quantum Latin squares} (QLSs)~\cite{mypaper1}, mathematical objects which generalise \textit{Latin squares}.  We use this concept to give a new construction of \textit{maximally entangled mutually unbiased bases} (MUBs), extending existing known techniques for Latin squares~\cite{ortho1,ortho2}. In addition we prove that our construction can produce bases that are unobtainable by existing methods~\cite{ortho1,ortho2}.
We also introduce the concept of \textit{mutually weak orthogonal quantum Latin squares }(MOQLS) which generalise \textit{mutually orthogonal Latin squares} (MOLS), about which a significant body of research exists in connection with quantum information, and particularly pertaining to the connection between MOLS and MUBs~\cite{klapp2,paterek,cao2016more}.
Mutually unbiased bases are of fundamental importance to quantum information, as they capture the physical notion of complementary observables, quantities that cannot be simultaneously measured. Entanglement is one of the central phenomena of quantum theory that is at the foundation  of quantum information and computation. 

The results presented in this paper were developed using the graphical calculus of categorical quantum mechanics (CQM), and we have made use of it where we believe it elucidates some detail. For those unfamiliar with  CQM, there is a short introduction of the concepts necessary to understand this paper in Appendix~\ref{apx:cqm}; for a thorough introduction please refer to~\cite{qcs, abramskycoecke2004, surveycategoricalquantummechanics}. Everything that we present here is in the category $\cat{FHilb}$ of finite Hilbert spaces and linear maps, but could be interpreted in any monoidal category such as $\cat{Rel}$ with \textit{quantum-like} properties, which have been extensively researched as quantum toy theories.

We start with a definition of  quantum Latin squares.
\begin{definition}
\label{def:qls}
A \textit{quantum Latin square of order $n$} is an  $n \times n$ array of elements of the Hilbert space $\C^n$, such that every row and every column is an orthonormal basis. 
\end{definition}
\begin{example}\label{ex:qls}
Here is an example of a quantum Latin square given in terms of the computational basis states $\ket i$ for $i\in \{0,...,9\}$, and the following states:

\begin{minipage}[t]{6.2cm}
\begin{align}\label{eq:a}
\ket a&:=\frac{1}{\sqrt{3}}(\ket 3+ \ket 4 + i\ket 5)\\ \label{eq:b}
\ket b&:=\frac{1}{\sqrt{6}}(2\ket 3- \ket 4 + i\ket 5)\\  \label{eq:c}
\ket c&:=\frac{1}{\sqrt{14}}(-2i\ket 3-i\ket 4 + 3\ket 5)
\end{align}
\end{minipage}
\begin{minipage}[t]{7cm}
\begin{align}\label{eq:abg}
\ket \alpha &:=\frac{1}{\sqrt{3}}(\ket 0+ \ket 1 + \ket 2)
\\ \label{eq:abg2}
\ket \beta &:=\frac{1}{\sqrt{3}}(\ket 0 +e^{\frac{2 \pi i}{3}} \ket 1 + e^{\frac{-2 \pi i}{3}}\ket 2 )\\ \label{eq:abg3}
\ket \gamma &:=\frac{1}{\sqrt{3}}(\ket 0 +e^{\frac{-2 \pi i}{3}} \ket 1 + e^{\frac{2 \pi i}{3}}\ket 2 )
\end{align}
\end{minipage}
\hspace{-67pt}\begin{equation*}\grid{
 \ket{0}  &  \ket{2}  &  \ket{1}  &  \ket{3}  &  \ket{5}  &  \ket{4}  &  \ket {6}  &  \ket {8}  &  \ket 7 
\\ \hline
 \ket 2 
&  \ket{1} 
&  \ket{0} 
&  \ket 5  &  \ket 4  &  \ket 3  &  \ket 8  &  \ket 7  &  \ket 6 
\\ \hline
 \ket{1} 
&  \ket{0} 
&  \ket{2} 
&  \ket{4}  &  \ket{3}  &  \ket 5  &  \ket 7  &  \ket 6  &  \ket 8  
\\ \hline
 \ket{6}  &  \ket{8}  &  \ket{7}  &  \ket{0}  &  \ket 2  &  \ket 1  &  \ket 3  &  \ket 5  &  \ket 4 
\\ \hline
\ket 8 & \ket 7 & \ket 6 & \ket 2 & \ket 1 & \ket 0 & \ket 5 & \ket 4 & \ket 3 \\ \hline

\ket 7 & \ket 6 & \ket 8 & \ket 1 & \ket 0 & \ket 2 & \ket 4 & \ket 3 & \ket 5  \\ \hline

 \ket  a & \ket c & \ket b & \ket 6 & \ket 8 & \ket 7 &  \ket \alpha  & \ket \gamma & \ket \beta  \\ \hline

\ket c & \ket b&\ket  a &\ket 8&\ket 7&\ket 6&\ket \gamma&\ket \beta&\ket \alpha \\ \hline

\ket b & \ket  a&\ket c &\ket 7&\ket 6&\ket 8&\ket \beta&\ket \alpha&\ket \gamma}
\end{equation*}
\end{example}
\noindent It can be checked that every row and every column is an orthonormal basis.

\begin{definition}[Latin square] A \textit{Latin square} is  a QLS with entries that all come from the computational basis. For those who are familiar with the traditional definition, it is recovered by mapping each computational basis state to a different symbol.\end{definition}
The main result of this paper is a construction of mutually unbiased maximally entangled bases from orthogonal QLSs. We now define the necessary concepts. 
\begin{definition}[Mutually unbiased bases]
Two orthonormal bases $\ket {a_i}$ and $\ket {b_j}$ for a Hilbert space $\mathcal H $ of dimension $n$ are \textit{mutually unbiased} when, for all $i,j \,\in \, \, \{0 ,...,n-1\}$~\cite{bandyopadhyay}:
\begin{equation}\label{def:mub}
|\inprod{a_i}{b_j}|^2 \, = \, \frac{1}{n}
\end{equation}
\end{definition}

\begin{definition}[Maximally entangled state]
A \textit{maximally entangled state} of a bipartite system is a  state $\ket \psi$ of a product Hilbert space $\mathcal H_A \otimes \mathcal H_B$ with dim($\mathcal H_B)=n,$ such that the partial trace over one of the systems of its density operator $\rho_{AB}=\ketbra{\psi}{\psi}$ is proportional to the identity. i.e~\cite{medefinition}.
\begin{equation}
\rho_A :=\sum_{k=0}(\id[A] \otimes \bra k) \rho_{AB} (\id[A] \otimes \ket{k})=\frac{1}{n}\id[A \otimes B]
\end{equation} 
\end{definition}

\begin{remark}For the Hilbert space $\mathcal H \otimes \mathcal H$ with dim$(\mathcal H)=n$, all maximally entangled states are of the following form, where $U$  is a unitary linear map and $\tinyspider$  is the classical structure (see Appendix~\ref{apx:cqm}) corresponding to the orthonormal basis $\ket k$~\cite{vedral}:
\begin{equation}\label{mes}\ket U :=\frac{1}{\sqrt{n}}\sum_{k=0}^{n-1}\ket k \otimes U\ket k \qquad \text{ or equivalently }\qquad \ket U :=\frac{1}{\sqrt{n}}\begin{pic}
\node[blackdot] (b) {};
\node (1) at (-0.5,1) {};
\node (2) at (0.5,1) {};
\node[morphism,scale=0.7] (U) at (0.5,0.5) {$U$};
\draw[string,in=left,out=down,looseness=1.3] (1.center) to (b.center);
\draw[string,in=up,out=down,looseness=1.3] (2.center) to (U.north);
\draw[string,in=down,out=right,looseness=1.1] (b.center) to (U.south);
\end{pic}\end{equation}
\end{remark}
\begin{definition}[Maximally entangled basis]\label{def:meb}
A \textit{maximally entangled basis} (MEB) for a bipartite system represented by a  tensor product Hilbert space $\mathcal H \otimes \mathcal H$, is an orthonormal basis such that each basis state is maximally entangled. \end{definition}
Two MEBs
$\mathcal A:=\ket{A_i}$ and $\mathcal B:=\ket{B_i}$ are equivalent when there exist unitaries $U$ and $V$ and complex numbers of modulus $1$, $c_i$ such that:
\begin{equation}\label{eq:eqimeb}
   \begin{pic}
      \node[state,xscale=1.5] {$A_i$};
      \node(1) at (-0.5,1) {};  
      \node(2) at (0.5,1) {}; 
      \draw (1.center) to +(0,-1);
      \draw (2.center) to +(0,-1);
   \end{pic}
   \quad = \quad
   c_i
   \begin{pic}
      \node[state,xscale=1.5] {$B_i$};
      \node(1)[morphism,scale=0.7] at (-0.5,0.5){$U$};
      \node(2)[morphism,scale=0.7] at (0.5,0.5){$V$};
      \node at (-0.5,1) {};  
      \node at (0.5,1) {};
      \draw (-0.5,0) to (1.south); 
      \draw (0.5,0) to (2.south);
      \draw (-0.5,1) to (1.north);
      \draw (0.5,1) to (2.north);
   \end{pic}
\end{equation}

 In Section~\ref{sec:main} we introduce our main result, the most general construction of mutually unbiased bases of the three presented in this paper. We introduce orthogonal quantum Latin squares and show how they can be used to construct MUBs, and we construct an explicit example. In Section~\ref{sec:left} we start with traditional orthogonality of Latin squares and then show that the definition of orthogonality for QLSs in Section~\ref{sec:main} generalises it. In Section~\ref{sec:bw} we present Beth and Wocjan's construction for MUBs in square dimension, and show that ours is strictly more general. In Section~\ref{sec:dual} we explain the correspondence between unitary error bases and maximally entangled bases and introduce mutually unbiased error bases. Finally in Section~\ref{sec:moqls} we introduce mutually weak  orthogonal quantum Latin squares, which generalise mutually orthogonal Latin squares.
\subsection*{Acknowledgements} The author is grateful to Dominic Verdon and Jamie Vicary for useful discussions, and to EPSRC for financial support.

\section{New construction for square dimension MUBs}\label{sec:main}
In this section we introduce the main result of this paper, a new construction for mutually unbiased maximally entangled bases. In order to formulate our construction we introduce \textit{weak orthogonal quantum Latin squares} which, as we will show in Section~\ref{sec:left},  reduce to  traditional orthogonal  Latin squares. It will be useful to introduce some notation for quantum Latin squares. Given a QLS $\mathcal Q$ we will denote the vector appearing in the $i^{th}$ column of the $j^{th}$ row as $\ket{Q_{ij}}$.

 Before the main result it will be requisite to define generalised Hadamards.
\begin{definition}[Hadamard, see \cite{hadamard}, Definition~2.1]
A \textit{Hadamard matrix of order $n$} is an $n \times n$ matrix $H$ with the following properties for all $i,j$, which we write in both matrix and index form:
\begin{align}
\label{had1}
|H_{ij}| &=1
&
H_{ij} ^\pdag H_{ij} ^* &= 1
\\
\label{had2}
H \circ H^\dag &= n \,\mathbb{I}_n
& \textstyle \sum_p H_{ip} ^\pdag H^* _{jp} &= n \, \delta _{ij}
\\
\label{had3}
H ^\dag \circ H &= n \,\mathbb{I}_n
& \textstyle \sum_p H ^*_{pi} H _{pj} ^\pdag &= n \, \delta _{ij}
\end{align}
\end{definition} 
We now introduce a method for constructing MEBs given as input a family of Hadamards and a quantum Latin square.  This construction is in fact dual to the quantum shift-and-multiply method for constructing unitary error bases~\cite{mypaper1}, as we will explain  in Section~\ref{sec:dual}.
\begin{definition}[Quantum Latin square maximally entangled basis]\label{def:qlsmeb}
Given a quantum Latin square $\mathcal Q$ and a family of Hadamards $H_j$, a \textit{quantum Latin square maximally entangled basis} $B(\mathcal Q, H_j)$ is defined as follows:
\begin{equation}\label{eq:qlsmeb}
\mathcal A := \left\{ A_{ij} = \frac{1}{\sqrt{n}}\sum_{k=0}^{n-1}\ket k \otimes \ketbra{Q_{kj}}{k}H_j\ket i \text{ such that } i,j \in \{ 0,..,n-1\} \right\}
\end{equation}
\end{definition}
\begin{lemma}\label{lem:qlsmeb}
Quantum Latin square maximally entangled bases are maximally entangled bases. 
\end{lemma}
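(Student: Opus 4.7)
The plan is to first rewrite the defining sum in a more tractable form, then verify (a) that each $A_{ij}$ is maximally entangled and (b) that the whole collection is orthonormal. Since $\bra{k}H_j\ket{i}=(H_j)_{ki}$, we have
\begin{equation*}
A_{ij}\;=\;\frac{1}{\sqrt n}\sum_{k=0}^{n-1}(H_j)_{ki}\,\ket k\otimes\ket{Q_{kj}}.
\end{equation*}
Comparing with the form~\eqref{mes} for a maximally entangled state, I would introduce the operator $U_{ij}:=\sum_k (H_j)_{ki}\,\ketbra{Q_{kj}}{k}$, so that $A_{ij}=\tfrac{1}{\sqrt n}\sum_k\ket k\otimes U_{ij}\ket k$.

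The first step is then to check that $U_{ij}$ is unitary. Computing $U_{ij}^\dagger U_{ij}=\sum_{k,k'}(H_j)_{k'i}^{*}(H_j)_{ki}\,\ket{k'}\!\bra{k}\cdot\braket{Q_{k'j}}{Q_{kj}}$, the QLS axiom (fixing the row $j$ and varying the column index gives an orthonormal basis) collapses $\braket{Q_{k'j}}{Q_{kj}}$ to $\delta_{kk'}$, leaving $\sum_k|(H_j)_{ki}|^2\,\ketbra{k}{k}=I$ by Hadamard condition~\eqref{had1}. By the remark following Definition~\ref{def:meb}, each $A_{ij}$ is therefore maximally entangled.

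Next, because the ambient space $\mathbb{C}^n\otimes\mathbb{C}^n$ has dimension $n^2$ and we have $n^2$ candidates, it suffices to show the family is orthonormal. A direct calculation gives
\begin{equation*}
\braket{A_{i'j'}}{A_{ij}}\;=\;\frac{1}{n}\sum_{k}(H_{j'})_{ki'}^{*}(H_j)_{ki}\,\braket{Q_{kj'}}{Q_{kj}}.
\end{equation*}
Now I invoke the \emph{other} orthonormality in a QLS: for fixed column $k$, varying the row $j$ yields an orthonormal basis, so $\braket{Q_{kj'}}{Q_{kj}}=\delta_{jj'}$. Hence when $j\neq j'$ the inner product vanishes; when $j=j'$ the expression reduces to $\tfrac{1}{n}\sum_k(H_j)_{ki'}^{*}(H_j)_{ki}=\delta_{ii'}$ by the Hadamard relation~\eqref{had3}. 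This establishes orthonormality and completes the proof.

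The steps are essentially bookkeeping, so there is no single hard obstacle; the only care needed is to keep straight which QLS axiom (rows vs.\ columns orthonormal) is invoked at which stage — row-orthonormality gives unitarity of $U_{ij}$, while column-orthonormality gives orthogonality across different $j$, with the Hadamard conditions handling the $i$ index.
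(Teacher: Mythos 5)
Your proof is correct, but it follows a genuinely different route from the paper's. The paper disposes of this lemma in a single line: it observes that the QLS MEB construction is the state--process dual of the quantum shift-and-multiply unitary error basis construction and delegates correctness to Theorem~19 of~\cite{mypaper1}, so no computation is carried out in this paper at all. You instead verify the definition directly: writing $A_{ij}=\tfrac{1}{\sqrt n}\sum_k\ket k\otimes U_{ij}\ket k$ with $U_{ij}:=\sum_k(H_j)_{ki}\ketbra{Q_{kj}}{k}$, you get unitarity of $U_{ij}$ --- hence maximal entanglement of each $A_{ij}$ via Equation~\eqref{mes} --- from the row axiom of the QLS together with the unimodularity condition~\eqref{had1}, and you get orthonormality of the $n^2$ states from the column axiom (which kills the terms with $j\neq j'$) together with the Hadamard relation~\eqref{had3} (which kills $i\neq i'$), finishing with a dimension count. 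Your bookkeeping is consistent with the paper's convention that $\ket{Q_{ij}}$ is the entry in column $i$ and row $j$, you invoke the correct one of \eqref{had2}/\eqref{had3}, and passing from $U_{ij}^\dag U_{ij}=\mathbb{I}_n$ to unitarity is legitimate in finite dimension. What the paper's route buys is economy and the conceptual identification of MEBs with unitary error bases, which it reuses later (Section~\ref{sec:dual}, and the inequivalence argument of Theorem~\ref{thm:ineq}); what your route buys is self-containedness --- the lemma no longer rests on an external reference --- and an explicit accounting of which hypothesis (row versus column orthonormality, \eqref{had1} versus \eqref{had3}) carries which part of the conclusion.
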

\begin{proof}
This MEB construction is the dual of the quantum shift-and-multiply basis construction, for a proof of the correctness of that construction see~\cite[Theorem 19]{mypaper1}.
\end{proof}\begin{definition}[Weak orthogonal quantum Latin squares]\label{def:woqls} Given a pair of QLSs $\mathcal P$ and $\mathcal Q$ with vector entries $\ket{P_{ij}}$ and $\ket{Q_{ij}}$ respectively, they are \textit{weak orthogonal} when for all $i,j \in \{0,..,n-1\}$, there exists unique $t \in \{0,...,n-1 \}$ such that:
\begin{equation}\label{eq:woqls}
\sum_{k=0}^{n-1}\ket k \braket{Q_{ki}}{P_{kj}}=\ket t
\end{equation} 
In words: if we take any row from $\mathcal P$ and any row from $\mathcal Q$ and compute the componentwise inner product of their vector entries, the resulting $n$ numbers will always be $n-1$ zeros and a single $1$. If the $1$ appears in the $t^{th}$ column then the output state  of the linear map above will be $\ket t.$
\end{definition} 
\begin{example}\label{ex:orthoqls}
We present a pair of $9 \times 9$ weak orthogonal quantum Latin squares, the first is the  QLS from Example~\ref{ex:qls}. Again let $\ket i$, $i\in \{0,...,9\}$ be the computational basis states and define the states $\ket a ,\ket b ,\ket c ,\ket \alpha,\ket \beta$ and $\ket \gamma$   as  in Equations~\eqref{eq:a}~\eqref{eq:b}~\eqref{eq:c}~\eqref{eq:abg}~\eqref{eq:abg2} and~\eqref{eq:abg3}. We define the following pair of QLSs:
\begin{align}
\mathcal P &:= \small\arraycolsep=3pt \grid{
 \ket{0}  &  \ket{2}  &  \ket{1}  &  \ket{3}  &  \ket{5}  &  \ket{4}  &  \ket {6}  &  \ket {8}  &  \ket 7 
\\ \hline
 \ket 2 
&  \ket{1} 
&  \ket{0} 
&  \ket 5  &  \ket 4  &  \ket 3  &  \ket 8  &  \ket 7  &  \ket 6 
\\ \hline
 \ket{1} 
&  \ket{0} 
&  \ket{2} 
&  \ket{4}  &  \ket{3}  &  \ket 5  &  \ket 7  &  \ket 6  &  \ket 8  
\\ \hline
 \ket{6}  &  \ket{8}  &  \ket{7}  &  \ket{0}  &  \ket 2  &  \ket 1  &  \ket 3  &  \ket 5  &  \ket 4 
\\ \hline
\ket 8 & \ket 7 & \ket 6 & \ket 2 & \ket 1 & \ket 0 & \ket 5 & \ket 4 & \ket 3 \\ \hline
\ket 7 & \ket 6 & \ket 8 & \ket 1 & \ket 0 & \ket 2 & \ket 4 & \ket 3 & \ket 5  \\ \hline
 \ket a & \ket c & \ket b & \ket 6 & \ket 8 & \ket 7 &  \ket \alpha  & \ket \gamma & \ket \beta  \\ \hline
\ket c & \ket b&\ket a &\ket 8&\ket 7&\ket 6&\ket \gamma&\ket \beta&\ket \alpha \\ \hline
\ket b & \ket a&\ket c &\ket 7&\ket 6&\ket 8&\ket \beta&\ket \alpha&\ket \gamma}
&
\mathcal Q &:= \small\arraycolsep=3pt\grid{
 \ket{0}  &  \ket{1}  &  \ket{2}  &  \ket{6}  &  \ket{7}  &  \ket{8}  &  \ket {3}  &  \ket {4}  &  \ket 5 
\\ \hline
 \ket 2 
&  \ket{0} 
&  \ket{1} 
&  \ket 8  &  \ket 6  &  \ket 7  &  \ket 5  &  \ket 3  &  \ket 4 
\\ \hline
 \ket{1} 
&  \ket{2} 
&  \ket{0} 
&  \ket{7}  &  \ket{8}  &  \ket 6  &  \ket 4  &  \ket 5  &  \ket 3  
\\ \hline
 \ket{a}  &  \ket{b}  &  \ket{c}  &  \ket{0}  &  \ket 1  &  \ket 2  &  \ket 6  &  \ket 7  &  \ket 8 
\\ \hline
\ket c & \ket a & \ket b & \ket 2 & \ket 0 & \ket 1 & \ket 8 & \ket 6 & \ket 7 \\ \hline
\ket b & \ket c & \ket a & \ket 1 & \ket 2 & \ket 0 & \ket 7 & \ket 8 & \ket 6  \\ \hline
 \ket 6 & \ket 7 & \ket 8 & \ket 3 & \ket 4 & \ket 5 &  \ket \alpha  & \ket \beta & \ket \gamma  \\ \hline
\ket 8 & \ket 6&\ket 7 &\ket 5&\ket 3&\ket 4&\ket \gamma&\ket \alpha&\ket \beta \\ \hline
\ket 7 & \ket 8&\ket 6 &\ket 4&\ket 5&\ket 3&\ket \beta&\ket \gamma&\ket \alpha}
\end{align}
It can be checked that if we take any row from $\mathcal P$ and any row from $\mathcal Q$ and compute the componentwise inner product of their vector entries, the resulting $n$ numbers will always be $n-1$ zeros and a single $1$.
\end{example}
\begin{theorem}\label{thm:main}
Given two indexed families of $n$ Hadamards $H_k$ and $G_j$both of size $n \times n$, and  a pair of $n \times n$ weak orthogonal quantum Latin squares $\mathcal P$ and $\mathcal Q$, the bases $B(\mathcal Q, H_k)$ and $B(\mathcal P, G_j)$ are mutually unbiased.
\end{theorem}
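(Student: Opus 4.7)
The plan is to verify the mutually unbiased condition by direct computation of the overlap $|\braket{A_{ij}}{B_{i'j'}}|^2$, where $\ket{A_{ij}} \in B(\mathcal Q, H_k)$ and $\ket{B_{i'j'}} \in B(\mathcal P, G_j)$. Since the ambient Hilbert space is $\mathcal H \otimes \mathcal H$ of dimension $n^2$, the target value of this overlap is $1/n^2$ for every pair of indices.

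First I would expand each basis vector in coordinates. Using $\bra k H_j \ket i = (H_j)_{ki}$, Definition~\ref{def:qlsmeb} yields
\begin{equation*}
\ket{A_{ij}} = \frac{1}{\sqrt n} \sum_{k=0}^{n-1} (H_j)_{ki}\, \ket k \otimes \ket{Q_{kj}},
\qquad
\ket{B_{i'j'}} = \frac{1}{\sqrt n} \sum_{\ell=0}^{n-1} (G_{j'})_{\ell i'}\, \ket \ell \otimes \ket{P_{\ell j'}}.
\end{equation*}
Taking their inner product, the $\braket{k}{\ell}$ factor forces $\ell = k$ and gives
\begin{equation*}
\braket{A_{ij}}{B_{i'j'}} = \frac{1}{n} \sum_{k=0}^{n-1} (H_j)_{ki}^* (G_{j'})_{ki'} \braket{Q_{kj}}{P_{kj'}}.
\end{equation*}

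Next I would invoke weak orthogonality of $\mathcal P$ and $\mathcal Q$ (Definition~\ref{def:woqls}) applied to the row indices $j$ and $j'$: it supplies a unique index $t = t(j,j') \in \{0,\ldots,n-1\}$ such that $\braket{Q_{kj}}{P_{kj'}} = \delta_{k,t}$. Substituting collapses the sum to a single surviving term:
\begin{equation*}
\braket{A_{ij}}{B_{i'j'}} = \frac{1}{n}\, (H_j)_{ti}^* \, (G_{j'})_{ti'}.
\end{equation*}

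Finally I would take the squared modulus and apply property~\eqref{had1} of Hadamards, namely $|(H_j)_{ti}| = |(G_{j'})_{ti'}| = 1$, obtaining
\begin{equation*}
|\braket{A_{ij}}{B_{i'j'}}|^2 = \frac{1}{n^2}\, |(H_j)_{ti}|^2 \, |(G_{j'})_{ti'}|^2 = \frac{1}{n^2},
\end{equation*}
which is exactly the mutually unbiased condition~\eqref{def:mub} for bases of $\mathcal H \otimes \mathcal H$. There is no real obstacle here: the argument is a chain of substitutions, and the only conceptual point is recognising that the weak orthogonality condition is exactly what is needed to collapse the coordinate sum to a single Hadamard entry, whose modulus is $1$ by definition. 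If it reads more cleanly I would carry out the same calculation in the graphical calculus, since Equation~\eqref{eq:woqls} is naturally expressed as a classical state in that setting.
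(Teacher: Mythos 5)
Your proof is correct and is essentially the paper's own argument: the paper carries out exactly this computation in the graphical calculus --- expanding the two basis states, using weak orthogonality (there packaged as the map $f$ evaluating to a basis state $\ket t$) to collapse the overlap to a product of single entries $(H_j)_{ti}$ and $(G_{j'})_{ti'}$, and then invoking unimodularity~\eqref{had1} to get $1/n^2$. Your coordinate computation matches those diagrammatic steps one-to-one, so the difference is purely notational.
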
 
\begin{proof}
See Appendix~\ref{apxprf}.
\end{proof}
\begin{example}\label{ex:qlsmub}
Given as input $\mathcal P$ and $\mathcal Q$ from Example~\ref{ex:orthoqls} and the Hadamard ${H=H_0=H_1=...=H_{n-1}}=G_0=...=G_{n-1}$ defined below with $\omega:=e^{2\pi i/3}$ we have constructed a pair of maximally entangled mutually unbiased bases $\mathcal A$ and $\mathcal B$ for the Hilbert space $\C^9 \otimes \C^9$.
\begin{equation}
H:=
\begin{pmatrix}
    1 & 1 & 1 & 1 & 1 &1&1&1&1\\
    1 & \omega & \omega^2 & 1 & \omega & \omega^2&1&\omega&\omega^2\\
1&\omega^2 &\omega&1&\omega^2 &\omega&1&\omega^2 &\omega\\
1&1&1&\omega&\omega&\omega&\omega^2&\omega^2&\omega^2\\
1&1&1&\omega&\omega&\omega&\omega^2&\omega^2&\omega^2\\
1&\omega^2 &\omega&\omega&1 &\omega^2&\omega^2 &\omega&1\\
1&1&1&\omega^2 &\omega^2 &\omega^2 &\omega&\omega&\omega\\
1&\omega&\omega^2&\omega^2&1&\omega&\omega&\omega^2&1\\
1&\omega^2 &\omega&\omega^2 &\omega&1&\omega&1 &\omega^2
 \end{pmatrix}
\end{equation}
 A sample of the $162$ basis states of $\mathcal A$ and $\mathcal B$ with some calculations showing mutual unbiasedness (see Definition~\ref{def:mub}) can be found in Appendix~\ref{apx:6561}. We have performed inner product calculations for all $6561$ combinations of states from $\mathcal A$ and $\mathcal B$ and can confirm that they are  mutually unbiased.   
\end{example}
\section{Weak orthogonality and Latin square conjugates}\label{sec:left}
In this section we explain how  weak orthogonality for QLSs restricts to orthogonality for Latin squares, and why this is the natural generalisation of orthogonality for QLSs. We start with the traditional definition of orthogonality.\begin{definition}[Orthogonal Latin squares]\label{def:ols}Given a pair of Latin squares $A$ and $B$ of equal size, we take each computational basis state from $A$  and form the ordered pair with the  state from $B$ corresponding to the same position in the grid. $A$ and $B$ are \textit{orthogonal} when this procedure gives us all possible pairs of computational basis states~\cite{mann}.   
 \end{definition} 
\noindent This definition does not lend itself to generalisation to QLSs since we may now have more than $n^2$ possible ordered pairs, but we can take an alternative approach. We characterise orthogonality in the following way:
\begin{lemma}
Latin squares $A$ and $B$ are orthogonal if and only if the following linear map $P$ is a permutation of basis states:
\begin{equation}\label{eq:orthols}
P:=\sum_{i=0}^{n-1}\sum_{j=0}^{n-1}\sum_{k=0}^{n-1}\ket i \ket j \bra{A_{ij}} \braket{k}{B_{ij}}\bra k
\end{equation}
\end{lemma}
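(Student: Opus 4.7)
The plan is to unpack the action of $P$ on computational basis states and show that $P$ being a permutation of basis states translates directly into the pair-counting condition of Definition~\ref{def:ols}.

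First, I would fix computational basis states $\ket a \otimes \ket b$ and compute
\begin{equation*}
P \ket a \ket b = \sum_{i,j,k} \ket i \ket j \,\braket{A_{ij}}{a}\,\braket{k}{B_{ij}}\,\braket{k}{b}.
\end{equation*}
Because $A$ and $B$ are Latin squares, $\ket{A_{ij}}$ and $\ket{B_{ij}}$ are computational basis states, so $\braket{A_{ij}}{a} = \delta_{A_{ij},a}$, and the sum over $k$ of $\braket{k}{B_{ij}}\braket{k}{b}$ collapses to $\braket{b}{B_{ij}} = \delta_{b, B_{ij}}$. Hence
\begin{equation*}
P \ket a \ket b = \sum_{i,j} \delta_{a,A_{ij}}\,\delta_{b,B_{ij}} \,\ket i \ket j.
\end{equation*}

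Next I would observe that, by the Latin square property applied to $A$, each value $a$ appears exactly $n$ times as some $A_{ij}$, so this sum has at most $n$ terms and in particular has integer coefficients in $\{0,1\}$. The vector $P\ket a\ket b$ is therefore a sum of distinct basis states $\ket i\ket j$, indexed by the set $S_{a,b} := \{(i,j) : A_{ij}=a,\ B_{ij}=b\}$.

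The final step is to equate the two sides of the biconditional. The map $P$ permutes computational basis states if and only if, for every $(a,b)$, the set $S_{a,b}$ has exactly one element; equivalently, the pairing map $(i,j) \mapsto (A_{ij}, B_{ij})$ is a bijection from $\{0,\dots,n-1\}^2$ to itself, which is precisely the statement that every ordered pair of computational basis states arises exactly once, i.e.\ the orthogonality condition of Definition~\ref{def:ols}. There is no real obstacle: the only thing to be careful about is to note that $P$ sending basis states to sums of basis states with $\{0,1\}$-coefficients is what forces it, once it is bijective, to be an honest permutation rather than an arbitrary unitary.
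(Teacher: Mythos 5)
Your proof is correct and takes essentially the same approach as the paper: the paper rewrites $P$ as $\sum_{i,j}\ket i \ket j \bra{A_{ij}}\bra{B_{ij}}$ (using real-valuedness of the entries and the resolution of the identity) and then reads off its action on basis states, which is exactly the computation you carry out directly on $\ket a \ket b$ with delta functions. Your concluding identification --- $P$ is a permutation of basis states iff each set $S_{a,b}$ is a singleton iff $(i,j)\mapsto(A_{ij},B_{ij})$ is a bijection --- is precisely the paper's final step.
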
 
\begin{proof}
We now rearrange the equation defining the linear map $P$:
\begin{align*}P&:=\sum_{i}\sum_{j}\sum_{k}\ket i \ket j \bra{A_{ij}} \braket{k}{B_{ij}}\bra k\\&=\sum_{i}\sum_{j}\sum_{k}\ket i \ket j \bra{A_{ij}} \braket{B_{ij}}{k}\bra k \\ &=\sum_{i=0}^{n-1}\sum_{j=0}^{n-1}\ket i \ket j \bra{A_{ij}} \bra{B_{ij}}\sum_k \ket{k}\bra k\\ &=\sum_{i=0}^{n-1}\sum_{j=0}^{n-1}\ket i \ket j \bra{A_{ij}} \bra{B_{ij}}
\end{align*}
The second equality above holds because all $\ket{B_{ij}}$ and $\ket k$ are real valued vectors, and so $\braket{k}{B_{ij}}=\overline{\braket{k}{B_{ij}}}=\braket{B_{ij}}{k}$. The third equality is just a rearranging of terms. The last equality holds by virtue of $\sum_k \ketbra{k}{k}$ being the resolution of the identity. The linear map $P$ takes in the state $\ket p \ket q$ and outputs a superposition of all the states $\ket i  \ket j$ such that $\ket{A_{ij}}=\ket p$ \textit{and} $\ket{B_{ij}}=\ket q$, or outputs $0$ if no such $i,j$ exist. $P$ is a permutation if and only if for all inputs $p,q$ there exists unique $i,j$ such that $\ket{A_{ij}}=\ket p$ \textit{and} $\ket{B_{ij}}=\ket q$, i.e. $A$ and $B$ are orthogonal Latin squares.
\end{proof}
We now have a condition that we can apply to quantum Latin squares. However, for QLSs $A$ and $B$ this turns out to preclude superpositions, thus making $A$ and $B$ Latin squares.
\begin{lemma}
Given a pair of quantum Latin squares, if they obey equation~\eqref{eq:orthols}, then they are Latin squares.
\end{lemma}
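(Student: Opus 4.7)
The plan is to expand equation~\eqref{eq:orthols} on a computational basis input and use a magnitude bound to force every entry of $A$ and $B$ to be (a unit-modulus scalar multiple of) a computational basis state; up to these inessential phases this is the paper's definition of a Latin square.

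First I would evaluate $P\ket p\ket q$ on a computational basis vector $\ket p\ket q$. Since $\sum_k\braket{k}{B_{ij}}\bra k$ applied to $\ket q$ collapses to the scalar $\braket{q}{B_{ij}}$, the formula in~\eqref{eq:orthols} becomes
\[
P\ket p\ket q \;=\; \sum_{i,j=0}^{n-1}\braket{A_{ij}}{p}\,\braket{q}{B_{ij}}\;\ket i\ket j.
\]
By hypothesis $P$ is a permutation of computational basis states, so there is a bijection $(f,g):\{0,\dots,n-1\}^2\to\{0,\dots,n-1\}^2$ with $P\ket p\ket q=\ket{f(p,q)}\ket{g(p,q)}$. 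Matching coefficients of $\ket i\ket j$ on both sides gives
\[
\braket{A_{ij}}{p}\,\braket{q}{B_{ij}} \;=\; \delta_{i,f(p,q)}\,\delta_{j,g(p,q)}.
\]

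Next I would invoke Cauchy--Schwarz. The vectors $\ket{A_{ij}}$, $\ket p$, $\ket q$, $\ket{B_{ij}}$ are all unit vectors, so $|\braket{A_{ij}}{p}|\le 1$ and $|\braket{q}{B_{ij}}|\le 1$. When $(i,j)=(f(p,q),g(p,q))$ the product of the two factors equals $1$, which forces each factor to have modulus exactly one. The equality case of Cauchy--Schwarz then forces $\ket{A_{f(p,q),g(p,q)}}$ to be a unit-modulus scalar multiple of $\ket p$, and $\ket{B_{f(p,q),g(p,q)}}$ to be a unit-modulus scalar multiple of $\ket q$.

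Finally, since $(f,g)$ is a bijection of $\{0,\dots,n-1\}^2$, every index pair $(i,j)$ is realised as $(f(p,q),g(p,q))$ for some unique $(p,q)$. Hence every entry of $A$ and every entry of $B$ is a unit-modulus scalar multiple of a computational basis state, and absorbing these phases into the entries (an operation that preserves the property of each row and column being an orthonormal basis) yields genuine Latin squares in the sense of the paper. The only real subtlety is the bookkeeping of the unimodular phases and noting that a QLS whose entries are all scalar multiples of computational basis states defines the same combinatorial Latin square as its phase-stripped version; the analytic heart of the argument is the one-line Cauchy--Schwarz magnitude bound.
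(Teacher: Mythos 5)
Your proof is correct, but it takes a genuinely different route from the paper's. The paper passes to the adjoint: since a permutation of basis states has an adjoint that is again a permutation, it suffices to evaluate $P^\dag$ on basis inputs, and a short computation (Hermiticity of the inner product, realness of the computational basis, and a resolution of the identity) gives $P^\dag(\ket p \ket q) = \ket{A_{pq}} \otimes \overline{\ket{B_{pq}}}$. The permutation property then forces each tensor factor to be a computational basis state, so each input pair $(p,q)$ directly probes exactly one entry of each square --- no Cauchy--Schwarz and no surjectivity argument is needed. Your forward argument instead expands $P\ket p\ket q$, matches coefficients against the permutation output $\ket{f(p,q)}\ket{g(p,q)}$, uses the equality case of Cauchy--Schwarz to pin the implicated entries to unimodular multiples of basis states, and then invokes bijectivity of $(f,g)$ to ensure every entry of $A$ and $B$ is reached. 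What your approach buys is precision about the phases: as you note, the hypothesis only forces entries to be basis states up to a unit scalar (indeed, multiplying every entry of $A$ and every entry of $B$ by the same unimodular constant leaves $P$ unchanged, so the phase caveat is genuinely necessary), a subtlety the paper's proof silently elides when it asserts that $\ket{A_{pq}}$ and $\overline{\ket{B_{pq}}}$ ``must be computational basis states.'' What the paper's adjoint trick buys is economy: the product structure of $P^\dag(\ket p\ket q)$ makes the conclusion immediate, entry by entry, with no bookkeeping of the induced bijection and no appeal to the equality case of an inequality.
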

\begin{proof}Let $A$ and $B$ be QLSs such that the linear map $P$ as defined above is a permutation of basis states. Then the adjoint of $P$, $P^\dag= \sum_{i}\sum_{j}\sum_{k}\ket{ A_{ij}} \ket k \bra{i} \braket{{B_{ij}}}{k}\bra j$ must also be a permutation of basis states. We input computational basis states $p$ and $q$ into $P^\dag$ 
\begin{align*}
P^\dag(\ket p \ket q)&=\sum_k\ket{A_{pq}}\ket k \braket{{B_{pq}}}{k}\\&=\sum_k\ket{A_{pq}}\ket k\overline{ \braket{k}{{B_{pq}}}}\\&=\sum_k\ket{A_{pq}}\ket k{ \braket{k}{\overline{{B_{pq}}}}}\\&=\ket{A_{pq}}\left [\sum_k\ket k \bra{k} \right ]\overline{\ket{{B_{pq}}}}\\&=\ket{A_{pq}}\overline{\ket{B_{pq}}}
\end{align*}
The second equality is due to the fact that the inner product is Hermitian, the third equality is due to $\ket k$ being real valued for all $k$, the fourth equality is an algebraic rearrangement and the final equality is a resolution of the identity. If $P^\dag$ above is a permutation of basis states, then for all $p,q \in \{0,...,n-1\}$, $\ket{A_{pq}}$ and $\overline{\ket{B_{pq}}}$ must be computational basis states. Thus $A$ and $B$ are Latin squares.
\end{proof}
In order to define orthogonality for QLSs we will now make a (very) brief detour into quasigroup theory. 
Latin squares can be thought of as the multiplication (Cayley) table for finite order quasigroups~\cite{smith} on the computational basis states. Let $*$ be the binary operation given by a Latin square. The fact that each state appears exactly once in each row and each column means that knowledge of any two of $a, b$ and $c$ in the equation ${a*b=c}$ uniquely determines the third. This means we can canonically define the binary operation $\backslash$ , read as \textit{left divide}, such that $a*b=c \Rightarrow a \backslash c=b$. This new binary operation defines a new quasigroup and therefore a new  Latin square called the \textit{left conjugate Latin square} (it can easily be checked that this does indeed give a Latin square)~\cite{smith}. The map that takes a Latin square and gives the left conjugate $L \super{$\backslash$}\longrightarrow L'$, is in fact involutive so we can recover $L$ from $L'$ by applying the map again. We will see a nice graphical characterisation of this fact below. The map $L \super{$\backslash$}\longrightarrow L'$ is a bijection on the set of all Latin squares. 
\begin{definition}[Left orthogonality]\label{def:lols}
Given a pair of Latin squares they are \textit{left orthogonal} when their left conjugates are orthogonal.
\end{definition}
\begin{remark}We could equally well talk about the right conjugate given by right divide and define right orthogonality. In this paper we only make use of left orthogonality.
\end{remark}

Since $L \super{$\backslash$}\longrightarrow L'$ is a bijection as mentioned above, the set of orthogonal Latin squares and left orthogonal Latin squares are isomorphic. Left orthogonality is in fact the property that we have generalised to QLSs in Definition~\ref{def:woqls}. 

To proceed further it will be useful to introduce some diagrams (see Appendix~\ref{apx:cqm}). Let \tinymult[lsdot] be a Latin square and $\tinyspider$ be the classical structure corresponding to the computational basis. Then the left divide map has the following form:
\begin{equation}\label{eq:leftdivide2}
\begin{pic}
   \node[lsdot] (ls) {};
   \draw[string,in=right,out=up] (0.25,-0.5) to (ls.center);
   \draw[string,in=left,out=up] (-0.25,-0.5) to (ls.center);
   \draw[string] (ls.center) to (0,0.5);
\end{pic}
\quad \super{$\backslash$}\longrightarrow \quad
\begin{pic}
   \node[lsdot] (ls) {};
   \node[blackdot,scale=0.8] (b)  at (-0.35,0.25) {};
   \draw[string] (ls.center) to (0,-0.5);
   \draw[string,in=right,out=down] (0.25,0.5) to (ls.center);
   \draw[string,out=left,in=right,looseness=1.7] (ls.center) to (b.center);
   \draw[string,in=left,out=up] (-0.7,-0.5) to (b.center);
\end{pic}
\end{equation}\label{eq:leftdivide} 
The fact that $\backslash$ is an involution can be verified using the snake equation:
\begin{equation}\label{eq:idem}
\begin{pic}
   \node[lsdot] (ls) {};
   \draw[string,in=right,out=up] (0.25,-0.5) to (ls.center);
   \draw[string,in=left,out=up] (-0.25,-0.5) to (ls.center);
   \draw[string] (ls.center) to (0,0.5);
\end{pic}
\quad \super{$\backslash$}\longrightarrow \quad
\begin{pic}
   \node[lsdot] (ls) {};
   \node[blackdot,scale=0.8] (b)  at (-0.35,0.25) {};
   \draw[string] (ls.center) to (0,-0.5);
   \draw[string,in=right,out=down] (0.25,0.5) to (ls.center);
   \draw[string,out=left,in=right,looseness=1.7] (ls.center) to (b.center);
   \draw[string,in=left,out=up] (-0.7,-0.5) to (b.center);
\end{pic}
\quad \super{$\backslash$}\longrightarrow \quad
\begin{pic}
   \node[lsdot] (ls) {};
   \node[blackdot,scale=0.8] (b)  at (-0.35,-0.25) {};
   \node[blackdot,scale=0.8] (b2) at (-0.7,0){};
   \draw[string] (ls.center) to (0,0.5);
   \draw[string,in=right,out=up] (0.25,-0.5) to (ls.center);
   \draw[string,out=left,in=right,looseness=1.7] (ls.center) to (b.center);
   \draw[string,in=left,out=right,looseness=1.7] (b2.center) to (b.center);
   \draw[string,in=left,out=up] (-0.95,-0.5) to (b2.center);
\end{pic}
\quad \super{\eqref{eq:sm}}= \quad
\begin{pic}
   \node[lsdot] (ls) {};
   \draw[string,in=right,out=up] (0.25,-0.5) to (ls.center);
   \draw[string,in=left,out=up] (-0.25,-0.5) to (ls.center);
   \draw[string] (ls.center) to (0,0.5);
\end{pic}
\end{equation} 
For Latin squares $A=\tinymult[lsdot]$ and $B=\tinymult[lssdot]$, equation~\eqref{eq:orthols} can be expressed diagramatically as follows:
\begin{equation}\label{eq:orthoperm}
\begin{pic}
   \node[morphism,scale=2](p) {$P$};
   \draw[string] (0.475,0.6) to (0.475,1.2);
   \draw[string] (-0.475,0.6) to (-0.475,1.2);
   \draw[string] (0.475,-0.6) to (0.475,-1.2);
   \draw[string] (-0.475,-0.6) to (-0.475,-1.2);\end{pic}
\quad := \quad
\begin{pic}
   \node[lsdot] (ls) at (0,-0.25){};  
   \node[lssdot] (lss) at (0,1){};
   \node[blackdot,scale=0.8] (b1) at(-0.5,0.375){};
   \node[blackdot,scale=0.8] (b2) at (0.5,0.375){};
   \node[blackdot,scale=0.8] (b3) at (0.4,1.3){};
   \draw[string, in=down,out=left] (ls.center) to (b1.center);
   \draw[string, in=down,out=right] (ls.center) to (b2.center);    
   \draw[string, in=right,out=up] (b2.center) to (lss.center);
   \draw[string, in=up,out=left] (lss.center) to (b1.center);
   \draw[string, in=left,out=up] (lss.center) to (b3.center); 
   
   \node (1) at (0,-0.8){};
   \draw[string] (1.center) to (ls.center);  
   \node (2) at (-1,1.8){};
   \draw[string] (2) to (b1.center); 
   \node (3) at (1,1.8){};
   \draw[string] (3) to (b2.center);         
   \node (4) at (1.25,-0.8){};
   \draw[string,in=right,out=up] (4.center) to (b3.center);
\end{pic}
\quad
\text{ is a permutation}
\end{equation}
We now  substitute in the left conjugates of Latin squares $A$ and $B$,  $\tinymult[lsdot]\super{$\backslash$}\longrightarrow \tinyleftdivide[lsdot]$ and  $\tinymult[lssdot]\super{$\backslash$}\longrightarrow \tinyleftdivide[lssdot]$  to obtain a linear map $P'$ which must be a permutation of basis states for $A$ and $B$ to be left orthogonal.
The condition that $A$ and $B$ are left orthogonal is thus equivalent to the following statement:
\begin{equation}\label{eq:lolsb}
\begin{pic}
   \node[morphism,scale=2](p) {$P'$};
   \draw[string] (0.475,0.6) to (0.475,1.2);
   \draw[string] (-0.475,0.6) to (-0.475,1.2);
   \draw[string] (0.475,-0.6) to (0.475,-1.2);
   \draw[string] (-0.475,-0.6) to (-0.475,-1.2);
\end{pic}
\quad := \quad
\begin{pic}
   \node[lsdot] (ls) at (0,-0.25){};  
   \node[lssdot] (lss) at (0,1){};
   \node[blackdot,scale=0.8] (b1) at(-0.75,0.375){};
   \node[blackdot,scale=0.8] (b2) at (0,0.375){};
   \node[blackdot,scale=0.8] (b3) at (0.4,1.3){};
   \node[blackdot,scale=0.8] (b5) at (-0.375,1.25){};
   \node[blackdot,scale=0.8] (b4) at (-0.375,-0.5){};
   \draw[string,out=left,in=up] (b5.center) to (b1.center);
   \draw[string, in=right,out=left] (ls.center) to (b4.center);
   \draw[string,out=left,in=down] (b4.center) to (b1.center);
   \draw[string, in=down,out=up] (ls.center) to (b2.center);    
   \draw[string, in=down,out=up] (b2.center) to (lss.center);
   \draw[string, in=right,out=left] (lss.center) to (b5.center);
   \draw[string, in=left,out=right,looseness=1.5] (lss.center) to (b3.center); 
   
   \node (1) at (0.25,-0.8){};
   \draw[string,in=right,out=up] (1.center) to (ls.center);  
   \node (2) at (-1,1.8){};
   \draw[string] (2.center) to (b1.center); 
   \node (3) at (1,1.8){};
   \draw[string] (3.center) to (b2.center);         
   \node (4) at (1.25,-0.8){};
   \draw[string,in=right,out=up] (4.center) to (b3.center);
\end{pic}
\quad \super{\eqref{eq:sm}}= \quad
\begin{pic}
   \node[lsdot] (ls) at (0,-0.25){};  
   \node[lssdot] (lss) at (0,1){};
   \node (b1) at(-0.75,0.375){};
   \node[blackdot,scale=0.8] (b2) at (0,0.375){};
   \node[blackdot,scale=0.8] (b3) at (0.4,1.3){};
   \node[blackdot,scale=0.8] (b5) at (-0.375,1.25){};
   \node[blackdot,scale=0.8] (b4) at (-0.375,-0.5){};
   \draw[string,out=left,in=up] (b5.center) to (b1.center);
   \draw[string, in=right,out=left] (ls.center) to (b4.center);
   \draw[string,out=left,in=down] (b4.center) to (b1.center);
   \draw[string, in=down,out=up] (ls.center) to (b2.center);    
   \draw[string, in=down,out=up] (b2.center) to (lss.center);
   \draw[string, in=right,out=left] (lss.center) to (b5.center);
   \draw[string, in=left,out=right,looseness=1.5] (lss.center) to (b3.center); 
   
   \node (1) at (0.25,-0.8){};
   \draw[string,in=right,out=up] (1.center) to (ls.center);  
   \node (2) at (-0.375,1.8){};
   \draw[string] (2.center) to (b5.center); 
   \node (3) at (1,1.8){};
   \draw[string] (3.center) to (b2.center);         
   \node (4) at (1.25,-0.8){};
   \draw[string,in=right,out=up] (4.center) to (b3.center);
\end{pic}
\text{ is a permutation}
\end{equation} 
In words: first we input two states $i$ and $j$ and then compute the component-wise  inner products of the $i^{th}$ row of $A$ and the $j^{th}$ row of $B$. There must be one unique  column, say $s,$ such that $\braket{B_{sj}}{A_{si}}=1$ with $\braket{B_{rj}}{A_{ri}}=0$ for all $r$ not equal to $s$. We then output $s$ on  the left and $\ket{A_{si}}$  on the right. The set of output states $s \otimes \ket{A_{si}}$ must be every possible combination of computational basis states. 

We can interpret this for QLSs but again we encounter the same difficulty. \begin{lemma}
Every pair of left orthogonal QLSs are Latin squares.
\end{lemma}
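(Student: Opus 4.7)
The plan is to exploit the assumption that the linear map $P'$ in~\eqref{eq:lolsb} is a permutation of computational basis states, and to show that this forces every entry of both $A$ and $B$ to itself be a computational basis state.

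First I would read off the action of $P'$ on a basis input. Using the verbal description that follows~\eqref{eq:lolsb}, we have $P'(\ket i \otimes \ket j) = \ket{s(i,j)} \otimes \ket{A_{s(i,j),i}}$, where $s(i,j)$ is the unique index guaranteed by weak orthogonality of $A$ and $B$. For $P'$ to send every basis vector to a basis vector, the second tensor factor $\ket{A_{s(i,j),i}}$ must itself be computational, for every input pair $(i,j)$.

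Next I would show that as $(i,j)$ ranges over $\{0,\dots,n-1\}^2$ the pair $(s(i,j),i)$ does too, which forces every entry of $A$ to be a basis state. Fixing $i$, the map $j\mapsto s(i,j)$ must be injective: two distinct $j$'s producing the same $s$ would yield identical outputs $\ket s\otimes\ket{A_{s,i}}$, contradicting the bijectivity of $P'$. Hence $j\mapsto s(i,j)$ is a bijection of $\{0,\dots,n-1\}$, so every $A_{si}$ appears in the form $A_{s(i,j),i}$ for some input, and is therefore a basis state. Thus $A$ is a Latin square.

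With $A$ now Latin, I would fix $s_0$ and $j$ and examine $B_{s_0,j}$. If two distinct $i_1,i_2$ both satisfied $s(i_1,j)=s(i_2,j)=s_0$, then the weak orthogonality condition $\braket{B_{s_0 j}}{A_{s_0,i}}=1$ would force $B_{s_0,j}=A_{s_0,i_1}=A_{s_0,i_2}$; but column $s_0$ of the Latin square $A$ consists of distinct computational basis states, so this is impossible. A counting argument then finishes: $P'$ has $n$ outputs with first coordinate $s_0$, and at most one comes from each $j$, so exactly one does. Hence $B_{s_0,j}=A_{s_0,i(s_0,j)}$ is a basis state for every $s_0,j$, and $B$ is a Latin square. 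The main obstacle is the reading-off of $P'$ in the first step: one has to interpret the right-hand side of~\eqref{eq:lolsb} as a concrete linear map $\C^n\otimes\C^n\to\C^n\otimes\C^n$ and confirm that the componentwise-inner-product block really produces the vector $\ket{s(i,j)}$ promised by weak orthogonality. Once that identification is made, the pigeonhole arguments above are essentially routine.
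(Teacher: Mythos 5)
Your proof is correct, and it takes a recognizably different route from the paper's, even though both rest on the same reading of $P'$ on basis inputs and a pigeonhole over the matching column $s(i,j)$. The paper argues by contradiction: if some entry $\ket{A_{pq}}$ is not a computational basis state, then column $p$ can never be the matching column for row $q$, so the $n$ rows of $B$ must each match row $q$ within the remaining $n-1$ columns; two rows of $B$ therefore match in the same column, forcing a repeated vector in a single column of $B$, which contradicts the QLS axiom for $B$. It then obtains Latin-ness of $B$ ``by reversing the roles,'' appealing to the claim that left orthogonality is a symmetric relation. You instead argue directly: injectivity of $j\mapsto s(i,j)$ comes from injectivity of the permutation $P'$ itself (not from the QLS axiom for $B$), surjectivity follows by finiteness, and every entry of $A$ is then literally an output of $P'$, hence a basis state; Latin-ness of $B$ is then extracted from the already-established Latin-ness of $A$ via column-distinctness of $A$, with no appeal to symmetry. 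That is a genuine advantage: the symmetry of left orthogonality for QLSs is asserted but never proved in the paper (the diagram defining $P'$ is not manifestly symmetric in $A$ and $B$), whereas your argument uses only the stated hypothesis; the paper's argument, in exchange, is slightly shorter per square and pinpoints the failure at the offending entry. The one caveat --- which you flag yourself, and which applies equally to the paper's own proof --- is the identification $P'(\ket i\otimes\ket j)=\ket{s(i,j)}\otimes\ket{A_{s(i,j),i}}$: strictly, $P'$ being a permutation of basis states only forces the matching inner product to be unimodular rather than equal to $1$, so entries agree up to a phase; this harmless imprecision affects both arguments identically and neither conclusion is endangered by it.
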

\begin{proof}
For a contradiction assume that $A$ and $B$ are left orthogonal QLSs that are not Latin squares. There is some vector  entry in $A$ that is not a computational basis state say $\ket{A_{pq}}$. For $P'$ as defined in Equation~\eqref{eq:lolsb} to be a permutation, $\ket{A_{pq}}$ cannot be the output on the right for any input $q,j$. This means that no row of $B$  has the complex conjugate of  $\ket{A_{pq}}$ as its $p^{th}$ column entry. But each row of $B$ must have one column entry that is the complex conjugate of the corresponding column entry of the $q^{th}$ row of $A.$ Thus at least two of the rows of $B$\ have the same vector in the same column. This violates the rule that $B$ is a QLS and thus gives a contradiction. Therefore $A$ must be a Latin square. Reversing the roles, we find that $B$ must be a Latin square too (left orthogonality, like orthogonality is a symmetric relation).
 \end{proof}
 The condition must therefore be weakened if we want to define a property that non-Latin square QLSs can satisfy. One approach is to delete the output from the right hand wire and require that the linear map thus obtained be a function on the computational basis states. This is in fact the \textit{weak orthogonality} property of Definition~\ref{def:woqls}. This condition turns out to be strong enough to give rise to interesting and useful properties such as using QLSs to build mutually unbiased MEBs (see Theorem \ref{thm:main}), yet weak enough so that pairs of Latin squares are weak orthogonal if and only if they are orthogonal.

Diagrammatically Definition~\ref{def:woqls} becomes the following:
\begin{equation}\label{eq:lols}
\begin{pic}
   \node[whitedot,scale=2](p) {$f$};
   \draw[string] (p.center) to (0,1.2);
   \draw[string] (0.475,0) to (0.475,-1.2);
   \draw[string] (-0.475,0) to (-0.475,-1.2);
\end{pic}
\quad := \quad
\begin{pic}
   \node[lsdot] (ls) at (0,-0.25){};  
   \node[lssdot] (lss) at (0,1){};
   \node (b1) at(-0.75,0.375){};
   \node[blackdot,scale=0.8] (b2) at (0,0.375){};
   \node[blackdot,scale=0.8] (b3) at (0.4,1.3){};
   \node[blackdot,scale=0.8] (b5) at (-0.375,1.25){};
   \node[blackdot,scale=0.8] (b4) at (-0.375,-0.5){};
   \draw[string,out=left,in=up] (b5.center) to (b1.center);
   \draw[string, in=right,out=left] (ls.center) to (b4.center);
   \draw[string,out=left,in=down] (b4.center) to (b1.center);
   \draw[string, in=down,out=up] (ls.center) to (b2.center);    
   \draw[string, in=down,out=up] (b2.center) to (lss.center);
   \draw[string, in=right,out=left] (lss.center) to (b5.center);
   \draw[string, in=left,out=right,looseness=1.5] (lss.center) to (b3.center); 
   
   \node (1) at (0.25,-0.8){};
   \draw[string,in=right,out=up] (1.center) to (ls.center);  
   \node (2) at (-0.375,1.8){};
   \draw[string] (2.center) to (b5.center); 
   \node[blackdot,scale=0.8] (3) at (0.8,1.5){};
   \draw[string] (3.center) to (b2.center);         
   \node (4) at (1.25,-0.8){};
   \draw[string,in=right,out=up] (4.center) to (b3.center);
\end{pic}
\quad \super{\eqref{eq:sm}}= \quad
\begin{pic}
   \node[lsdot] (ls) at (0,-0.25){};  
   \node[lssdot] (lss) at (0,1){};
   \node (b1) at(-0.75,0.375){};
   \node[blackdot,scale=0.8] (b3) at (0.4,1.3){};
   \node[blackdot,scale=0.8] (b5) at (-0.375,1.25){};
   \node[blackdot,scale=0.8] (b4) at (-0.375,-0.5){};
   \draw[string,out=left,in=up] (b5.center) to (b1.center);
   \draw[string, in=right,out=left] (ls.center) to (b4.center);
   \draw[string,out=left,in=down] (b4.center) to (b1.center);
   \draw[string, in=down,out=up] (ls.center) to (b2.center);    
   \draw[string, in=down,out=up] (b2.center) to (lss.center);
   \draw[string, in=right,out=left] (lss.center) to (b5.center);
   \draw[string, in=left,out=right,looseness=1.5] (lss.center) to (b3.center); 
   
   \node (1) at (0.25,-0.8){};
   \draw[string,in=right,out=up] (1.center) to (ls.center);  
   \node (2) at (-0.375,1.8){};
   \draw[string] (2.center) to (b5.center); 
   \node (3) at (1,1.8){};
   \node (4) at (1.25,-0.8){};
   \draw[string,in=right,out=up] (4.center) to (b3.center);
\end{pic}
\text{ is a function}
\end{equation}

\begin{lemma}\label{lem:rest}
Given a pair of Latin squares, $A$ and $B$ the following are equivalent:
\begin{itemize}
   \item $A$ and $B$ are weak orthogonal (see Definition~\ref{def:woqls}).   \item $A$ and $B$ are left orthogonal (see Definition~\ref{def:lols}).
\end{itemize}
\end{lemma}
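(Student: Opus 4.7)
My plan is to unpack each notion concretely for a pair of Latin squares $A$ and $B$ and show that the two conditions coincide after a short counting argument. Since every entry of $A$ and $B$ lies in the computational basis, every inner product $\braket{B_{sj}}{A_{si}}$ equals $0$ or $1$. Thus weak orthogonality (Definition~\ref{def:woqls}) reduces to the statement that for each pair of row indices $(i,j)$ there is \emph{exactly one} column $s=s(i,j)$ at which the entries of row $i$ of $A$ and row $j$ of $B$ agree. Left orthogonality, via the diagrammatic reformulation~\eqref{eq:lolsb}, asks in addition that the assignment $(i,j)\mapsto(s(i,j),\ket{A_{s(i,j),i}})$ be a permutation of $\{0,\dots,n-1\}^2$.

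The direction ``left orthogonal $\Rightarrow$ weak orthogonal'' is then immediate from the diagrams: for~\eqref{eq:lolsb} to be a permutation of basis states, the output on every input must be a pure tensor of basis states, and projecting onto the left factor yields exactly the weak orthogonality function~\eqref{eq:lols}.

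The converse is the main content. Assuming weak orthogonality, the assignment $f\colon(i,j)\mapsto(s(i,j),\ket{A_{s(i,j),i}})$ is already a well-defined function from basis states to basis states; because the source and target both have cardinality $n^2$, it suffices to prove injectivity. I would argue as follows: if $f(i_1,j_1)=f(i_2,j_2)=(s,\ket v)$, then $A_{s,i_1}=v=A_{s,i_2}$, and the fact that column $s$ of the Latin square $A$ is a permutation of basis states forces $i_1=i_2$; by the matching property defining $s$ in each case we also have $B_{s,j_1}=v=B_{s,j_2}$, and the Latin property of $B$ yields $j_1=j_2$.

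The only point requiring care is identifying the combinatorial content of left orthogonality with the diagrammatic condition~\eqref{eq:lolsb}, but this identification is already handled by the derivation earlier in this section starting from~\eqref{eq:orthols} and substituting the left conjugates via~\eqref{eq:leftdivide2}. I do not expect any obstacle beyond keeping the row/column indexing conventions straight.
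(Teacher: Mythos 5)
Your proof is correct, and it matches the paper's structure in the easy direction (left orthogonal $\Rightarrow$ weak orthogonal is immediate, since a permutation condition on $P'$ trivially implies the function condition on $f$). In the converse, however, you take the dual counting argument to the paper's. The paper proves that $P'$ is \emph{surjective}: for each target pair $(p,q)$ it constructs a preimage, using the fact that column $p$ of $A$ and column $p$ of $B$ each contain every computational basis state, so there exist rows $i,j$ with $\ket{A_{pi}}=\ket{B_{pj}}=\ket q$; weak orthogonality then forces $P'(\ket i \otimes \ket j)=\ket p \otimes \ket q$, and surjectivity onto a set of equal finite cardinality gives a permutation. You instead prove that the map $(i,j)\mapsto(s(i,j),\ket{A_{s(i,j),i}})$ is \emph{injective}: if two inputs share the same output $(s,\ket v)$, the Latin property of column $s$ of $A$ forces the first coordinates to agree, and the Latin property of column $s$ of $B$ (via $B_{s,j}=v$ at the agreement column) forces the second coordinates to agree; injectivity between finite sets of size $n^2$ again gives a bijection. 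Both arguments hinge on the same structural fact --- each column of a Latin square contains every symbol exactly once --- and both finish with the same pigeonhole step, so neither buys more generality; the paper's surjectivity argument has the mild advantage of exhibiting preimages explicitly, while yours is arguably slightly cleaner in that it never needs to exhibit a witness, only to rule out collisions. One small point to make explicit if you write this up fully: weak orthogonality only guarantees that the \emph{left} output $s(i,j)$ is well defined, but since $A$ is a Latin square the right output $\ket{A_{s(i,j),i}}$ is automatically a computational basis state, so your $f$ is indeed a function between $n^2$-element sets of basis states --- you use this silently when invoking the cardinality argument.
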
 
\begin{proof}
If $A$ and $B$ are left orthogonal then $P'$, as defined in Equation~\eqref{eq:lolsb}, is a permutation of basis states, which clearly implies the weaker condition that $f$ as defined in Equation~\eqref{eq:lols} is a function. For the other implication let $A$ and $B$ be weak orthogonal Latin squares. Consider the $p^{th}$ columns of $A$ and $B.$ They both contain all $n$ computational basis states and there must therefore exist values of $i$ and $j$ for all $q\in \{0,...,n-1\}$ such that $\ket{A_{pi}}=\ket{B_{pj}}=\ket q$. So for column $p$ there exist $i,j$ such that  $P'(\ket i \otimes \ket j)=\ket p \otimes\ket q$ for all $q$. This is true for all rows $q$, so $P'$ is a permutation. 
\end{proof} 
\begin{remark}
We defined weak orthogonality from left orthogonality by setting the requirement  that the linear map $P'$ (see Equation~\eqref{eq:lolsb})with the right hand output deleted needs to be a function on the basis states, rather than requiring $P'$ itself to be a permutation of the basis states. We could have tried to weaken orthogonality directly by requiring that $P$ (see Equation~\eqref{eq:orthoperm}) with the right hand output deleted be a function on basis states. However, it turns out that this would still  preclude non-Latin square QLSs.   
\end{remark}
\section{Beth and Wocjan's MUB construction}\label{sec:bw}

In their 2004 paper~\cite{ortho1} Beth and Wocjan gave a construction for a pair of mutually unbiased bases of a  Hilbert space $\mathcal H$ of square dimension $s=n^2$,  given as input a pair of  $n \times n$ orthogonal Latin squares and an $n \times n$ Hadamard matrix which was later put in explicit Latin square form by Wehner and Winter~\cite{ortho1,ortho2}.

The construction takes each Latin square together with the Hadamard and produces an MEB of dimension $n^2$. The fact that the Latin squares are orthogonal is then shown to entail that these two bases are mutually unbiased. I will refer to this MEB construction as the Left Beth-Wocjan maximally entangled basis (LBW MEB) construction\footnote{The construction presented here is technically the construction given by taking the left conjugate of the Latin square $L$ first and then applying the construction defined by Beth and Wocjan. Since taking the left conjugate gives us a bijection (see Equation~\eqref{eq:leftdivide}) on the set of Latin squares the MEBs obtainable are not affected by this.}.

\begin{definition}[Left Beth-Wocjan maximally entangled basis]\label{def:bw}
Given an $n \times n$ Latin square $L$ and an $n\times n$ Hadamard $H$, then $\mathcal B$ as  defined below is a \textit{Left Beth-Wocjan maximally entangled basis} (LBW MEB).~\footnote{~\label{ftn}The definition below is slightly different to the one given by Beth and Wocjan even taking into account the use of the left conjugate Latin square. However, when the input is a Latin square the two constructions agree precisely. }\begin{equation}\label{eq:bwmeb}
\mathcal B := \left\{ B_{ij} = \frac{1}{\sqrt{n}}\sum_{k,p=0}^{n-1}\ket {k,p}H_{ik}\braket{L_{kp}} j \text{ such that } i,j \in \{ 0,..,n-1\} \right\}
\end{equation} 
\end{definition} 

The graphical calculus gives a good notation with which to compare LBW MEBs to  QLS MEBs (see Definition~\ref{def:qlsmeb}). 
\begin{lemma}\label{lem:subset}
Under the restriction to Latin squares and to having a single fixed Hadamard the  QLS MEBs  are the same as LBW MEBs. 
\end{lemma}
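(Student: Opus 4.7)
The plan is to unfold both definitions under the stated restrictions and exhibit an explicit bijection between the inputs of the two constructions that matches the outputs.

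First I would evaluate the LBW sum $\sum_{k,p}\ket{k,p}H_{ik}\braket{L_{kp}}{j}$ using that $L$ is a Latin square: in each row $k$ there is a unique column $p$ at which the state $\ket{j}$ appears, so $\braket{L_{kp}}{j}$ is the indicator selecting that column; let $L'_{kj}$ denote the chosen column index. The LBW MEB then collapses to
\[
B_{ij}=\tfrac{1}{\sqrt n}\sum_{k=0}^{n-1} H_{ik}\,\ket k\otimes\ket{L'_{kj}}.
\]
A parallel unfolding of the QLS MEB, using $\ketbra{Q_{kj}}{k}H\ket{i} = H_{ki}\ket{Q_{kj}}$, gives
\[
A_{ij}=\tfrac{1}{\sqrt n}\sum_{k=0}^{n-1} H_{ki}\,\ket k\otimes\ket{Q_{kj}}.
\]

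Then I would identify the array $L'=(L'_{kj})$ as the \emph{left conjugate} Latin square of $L$: each row of $L'$ is the inverse permutation of the corresponding row of $L$, so $L'$ is itself a Latin square, and this section has already established (via the diagrammatic calculation around Equation~\eqref{eq:idem}) that $L\mapsto L'$ is an involutive bijection on the set of Latin squares. Combined with the elementary fact that Hadamard transpose is also an involutive bijection on the set of $n\times n$ Hadamards, this produces an involutive bijection $(L,H)\mapsto(L',H^{\T})$ on the input data of the two constructions. Under this bijection the QLS MEB formula turns into the LBW MEB formula: the Hadamard factor becomes $(H^{\T})_{ki}=H_{ik}$, and the Latin-square factor becomes $\ket{L'_{kj}}$.

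Finally I would conclude that every LBW MEB coincides with a QLS MEB on the transformed input, and conversely every QLS MEB in the lemma's restricted class is an LBW MEB, so the two families of MEBs are identical. The only real obstacle is keeping the indices straight through the Hadamard transpose and the row-inverse operation; the substantive observation is simply that $\braket{L_{kp}}{j}$ functions as the indicator of the inverse row permutation of $L$, which is exactly what the left conjugate records.
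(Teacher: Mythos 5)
Your proof is correct, but it takes a genuinely different route from the paper's. The paper's proof is a one-line diagrammatic identification: it renders both constructions as string diagrams built from the multiplication map of $L$ and the Hadamard $H$ and observes that the two pictures are literally identical. Your proof instead unfolds the algebraic formulas \eqref{eq:qlsmeb} and \eqref{eq:bwmeb}, recognizes $\braket{L_{kp}}{j}$ as the indicator of the row-inverse (left conjugate) $L'$, and exhibits the explicit involutive bijection $(L,H)\mapsto(L',H^{\mathrm{T}})$ on input data under which one formula transforms into the other. The difference is not cosmetic: under the paper's own conventions (the Latin-square dot denotes $\ket{k}\otimes\ket{j}\mapsto\ket{L_{kj}}$), the common diagram in the paper's proof evaluates to $\tfrac{1}{\sqrt{n}}\sum_k H_{ki}\,\ket{k}\otimes\ket{L_{kj}}$, which is the QLS formula on the nose but matches Definition~\ref{def:bw} only after precisely the substitution $L\mapsto L'$, $H\mapsto H^{\mathrm{T}}$ that you carry out; entry-by-entry at a fixed input $(L,H)$ the two formulas generally disagree. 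So the paper's argument silently absorbs the conjugate/transpose bookkeeping into its translation of Definition~\ref{def:bw} into a diagram and proves equality of diagram families, whereas your argument proves the lemma exactly as stated --- equality of the two \emph{families} of MEBs --- with that bookkeeping made explicit. What each buys: the diagrammatic route is shorter and meshes with how the rest of the paper reasons (Theorem~\ref{thm:main} is proved in the same calculus); yours is self-contained, checkable by index manipulation, and explains why the lemma must be read as a statement about families rather than a pointwise identity. One small tightening: your parenthetical reason that $L'$ is a Latin square (its rows are inverse permutations of the rows of $L$) covers only the row condition; for the column condition note that $L'_{kj}=L'_{k'j}=p$ with $k\neq k'$ would force the symbol $j$ to occur twice in a single line of $L$ --- or, as you also do, cite Section~\ref{sec:left}, where conjugation is established to be an involutive bijection on Latin squares.
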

\begin{proof}We construct an LBW MEB $B_{ij}$ and a QLS MEB $C_{ij}$ from the latin square $L=\tinymult[lsdot]$ and Hadamard $H$.

\vspace{20pt}
\hspace{0pt}\begin{tabular}[b]{m{6cm} m{7cm}}
\textit{\textbf{Left Beth-Wocjan MEB}} &
\textit{\textbf{Quantum Latin square MEB}} \\
\begin{equation*}\label{eq:bwdiag}\vspace{15pt}B_{ij}:=
\frac{1}{\sqrt{n}} \begin{pic}
          \node (in) at (-0.5,4) {};         
          \node (H)[morphism,wedge,scale=0.5] at(0.25,1.5) {H};
          \node (b1)[blackdot,scale=1] at (0.25,2) {};
          \node (b2)[lsdot,scale=1.2] at (1,3) {};          
          \node (i)[state,black,scale=0.5] at(0.25,1) {$i$};
          \node (0b) at (0.5,2) {};
          \node (j)[state,black,scale=0.5] at(1.5,2) {$j$};
          \node (out) at (1,4) {}; 
          \draw[string,out=270,in=180,looseness=0.75] (in.center) to (b1.center);           \draw (H.north) to (b1.center);
          \draw[string,out=0,in=180] (b1.center) to (b2.center);
          \draw[string,out=90,in=0] (j.center) to (b2.center);
          \draw[string] (b2.center) to (out.center);
          \draw[string] (i) to (H.south);
\end{pic}\end{equation*} &
\begin{equation*}\label{eq:qlsdiag}C_{ij}:=
\frac{1}{\sqrt{n}}\begin{pic}
          \node (in) at (-0.5,4) {};         
          \node (H)[morphism,wedge,scale=0.5] at(0.25,1.5) {H};
          \node (b1)[blackdot,scale=1] at (0.25,2) {};
          \node (b2)[lsdot,scale=1.2] at (1,3) {};          
          \node (i)[state,black,scale=0.5] at(0.25,1) {$i$};
          \node (0b) at (0.5,2) {};
          \node (j)[state,black,scale=0.5] at(1.5,2) {$j$};
          \node (out) at (1,4) {}; 
          \draw[string,out=270,in=180,looseness=0.75] (in.center) to (b1.center);           \draw (H.north) to (b1.center);
          \draw[string,out=0,in=180] (b1.center) to (b2.center);
          \draw[string,out=90,in=0] (j.center) to (b2.center);
          \draw[string] (b2.center) to (out.center);
          \draw[string] (i) to (H.south);
          \end{pic}\end{equation*}
\\
\end{tabular}

We see that the diagrams are the same.\end{proof}
\begin{theorem}\label{thm:bw}
Given a pair of $n \times n$ left\footnote{In their paper Beth and Wocjan use orthogonal Latin squares, but since we defined their MEB construction on the left conjugate the \textit{left} becomes necessary here.} orthogonal Latin squares and an $n \times n$ Hadamard,  construct two LBW MEBs using each Latin square with the Hadamard. The bases are mutually unbiased.  
\end{theorem}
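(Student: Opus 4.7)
The plan is to derive Theorem~\ref{thm:bw} as an immediate corollary of the more general Theorem~\ref{thm:main}, using the two bridging lemmas already established: Lemma~\ref{lem:rest}, which equates left orthogonality with weak orthogonality on Latin squares, and Lemma~\ref{lem:subset}, which identifies LBW MEBs with QLS MEBs in the restricted setting of Latin squares and a single Hadamard.

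Concretely, I would argue as follows. Suppose $L$ and $L'$ are $n\times n$ left orthogonal Latin squares and $H$ is an $n\times n$ Hadamard. First, view $L$ and $L'$ as quantum Latin squares in the obvious way. By Lemma~\ref{lem:rest}, left orthogonality of $L$ and $L'$ is equivalent to weak orthogonality in the sense of Definition~\ref{def:woqls}. Next, define two constant families of Hadamards by $H_k := H$ and $G_j := H$ for all $k,j \in \{0,\dots,n-1\}$. Then Theorem~\ref{thm:main} applies to the weak orthogonal pair $(L,L')$ together with the families $(H_k)$ and $(G_j)$, yielding that the two QLS MEBs $B(L,H_k)$ and $B(L',G_j)$ are mutually unbiased.

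Finally I would invoke Lemma~\ref{lem:subset}: since each input is a Latin square together with a single fixed Hadamard, the QLS MEBs $B(L,H_k)$ and $B(L',G_j)$ coincide with the LBW MEBs built from $(L,H)$ and $(L',H)$ respectively, as defined in Definition~\ref{def:bw}. Mutual unbiasedness therefore transfers directly, completing the proof.

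There is no real obstacle to grind through, since the heavy lifting is packaged inside Theorem~\ref{thm:main} (whose proof is deferred to the appendix) and inside the two lemmas. The only thing to be careful about is the bookkeeping that a single Hadamard $H$ is legitimately regarded as the constant family $H_k = H$, and that the LBW construction being compared is the left conjugate version specified in Definition~\ref{def:bw} (as noted in footnote~\ref{ftn}), so that the diagrammatic equality in Lemma~\ref{lem:subset} really does identify the two basis constructions on the nose rather than up to some reshuffling.
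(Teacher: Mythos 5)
Your proposal is correct, but it is not the route the paper takes, because the paper in fact gives no proof of Theorem~\ref{thm:bw} at all: that theorem is imported as Beth and Wocjan's result (its proof lives in the cited paper~\cite{ortho1}), and the paper's machinery surrounding it --- the unnamed lemma immediately after Theorem~\ref{thm:bw} and Corollary~\ref{corl} --- is pointed in the opposite direction, namely to show that the construction of Theorem~\ref{thm:main} \emph{contains} the LBW construction, i.e.\ that the new construction generalises the old one, not to re-derive the old theorem. Your argument turns that containment around and proves Theorem~\ref{thm:bw} internally: Lemma~\ref{lem:rest} converts left orthogonality into weak orthogonality, the constant families $H_k := H$ and $G_j := H$ feed into Theorem~\ref{thm:main}, and Lemma~\ref{lem:subset} identifies the resulting QLS MEBs with the LBW MEBs of Definition~\ref{def:bw}. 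This is sound and non-circular, since the appendix proof of Theorem~\ref{thm:main} nowhere invokes Theorem~\ref{thm:bw}; what it buys is a self-contained presentation in which Beth--Wocjan's theorem becomes a corollary of the new framework rather than an externally cited fact. One small bookkeeping point you did not address: Theorem~\ref{thm:main} pairs $\mathcal Q$ with $H_k$ and $\mathcal P$ with $G_j$, so applied to the pair $(L,L')$ it literally yields that $B(L',H_k)$ and $B(L,G_j)$ are mutually unbiased, whereas you quote $B(L,H_k)$ and $B(L',G_j)$; since both families are the same constant family $H$ these are the same two bases, and weak orthogonality is in any case symmetric, so the mismatch is harmless.
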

\begin{lemma}
The construction of MUBs in Theorem~\ref{thm:main} restricts to the construction of Theorem~\ref{thm:bw}, under the restriction of the QLS to a Latin square and the two families of Hadamards to a single fixed Hadamard. 
\end{lemma}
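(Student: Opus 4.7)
The plan is to directly combine the two preceding lemmas, since each handles one of the two types of input data that differ between Theorem~\ref{thm:main} and Theorem~\ref{thm:bw}. First I would fix the restricted setup: suppose $\mathcal P$ and $\mathcal Q$ are Latin squares (rather than genuine QLSs) and let $H_0 = H_1 = \cdots = H_{n-1} = G_0 = \cdots = G_{n-1} = H$ for a single fixed Hadamard $H$. Under these hypotheses Theorem~\ref{thm:main} takes as input exactly the same data as Theorem~\ref{thm:bw}, up to the distinction between weak orthogonality of QLSs and left orthogonality of Latin squares.

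Next I would address that distinction using Lemma~\ref{lem:rest}: since $\mathcal P$ and $\mathcal Q$ are Latin squares, weak orthogonality and left orthogonality coincide, so the hypothesis of Theorem~\ref{thm:main} holds if and only if the hypothesis of Theorem~\ref{thm:bw} holds. This aligns the two input conditions.

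Then I would appeal to Lemma~\ref{lem:subset}: under the assumption that the QLS is a Latin square and the family of Hadamards is constant, each QLS MEB $B(\mathcal Q, H_k)$ from Definition~\ref{def:qlsmeb} equals the corresponding LBW MEB from Definition~\ref{def:bw} (and similarly for $B(\mathcal P, G_j)$). Applying this to both bases in the pair produced by Theorem~\ref{thm:main} yields exactly the pair of LBW MEBs produced by Theorem~\ref{thm:bw}, so the output MUB pairs coincide.

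The combined argument is straightforward because the real content has been pushed into the two lemmas; the only subtlety to check is that in Theorem~\ref{thm:main} one QLS is paired with one family of Hadamards on the same side of the construction as in Theorem~\ref{thm:bw} (so that the two Latin-square-with-single-Hadamard MEBs obtained by restriction are indeed the ones Beth and Wocjan form), which is immediate from comparing Equations~\eqref{eq:qlsmeb} and~\eqref{eq:bwmeb} as in the proof of Lemma~\ref{lem:subset}.
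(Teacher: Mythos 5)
Your proof is correct and takes essentially the same route as the paper, whose entire proof is the single line ``Follows directly from Lemma~\ref{lem:subset}.'' Your additional appeal to Lemma~\ref{lem:rest} to reconcile the weak-orthogonality hypothesis of Theorem~\ref{thm:main} with the left-orthogonality hypothesis of Theorem~\ref{thm:bw} fills in a step the paper leaves implicit, and makes the restriction argument more complete.
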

\begin{proof}
Follows directly from Lemma~\ref{lem:subset}. 
\end{proof}
The following  corollary gives a construction for MUBs in square dimension that is more general than the LBW MUB construction but not as general as our main construction. \begin{corollary}\label{corl}
Given two indexed families of $n$ Hadamards $H_k$ and $G_j$ both of size $n \times n$, and  a pair of $n \times n$ left orthogonal  Latin squares $\mathcal P$ and $\mathcal Q$, the bases $B(\mathcal P, H_k)$ and $B(\mathcal Q, G_j)$ are mutually unbiased.
\end{corollary}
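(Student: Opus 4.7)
The plan is to obtain Corollary~\ref{corl} as an immediate specialisation of Theorem~\ref{thm:main}, using Lemma~\ref{lem:rest} to bridge the two notions of orthogonality. The key observation is that Latin squares are a restricted class of quantum Latin squares (Definition of Latin square), so the input data of the corollary is simply a particular instance of the input data of Theorem~\ref{thm:main}, provided we can translate ``left orthogonal'' into ``weak orthogonal''.

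First I would note that, since $\mathcal P$ and $\mathcal Q$ are Latin squares, they may be regarded as quantum Latin squares whose every entry is a computational basis state. Second, I would invoke Lemma~\ref{lem:rest}, which asserts that for a pair of Latin squares, weak orthogonality (Definition~\ref{def:woqls}) and left orthogonality (Definition~\ref{def:lols}) coincide. Applying this lemma to $\mathcal P$ and $\mathcal Q$, the hypothesis that they are left orthogonal immediately yields that they are weak orthogonal quantum Latin squares. Third, I would feed the pair $(\mathcal P,\mathcal Q)$ together with the two families of Hadamards $H_k$ and $G_j$ into Theorem~\ref{thm:main}. The theorem then concludes that the bases $B(\mathcal Q, H_k)$ and $B(\mathcal P, G_j)$ are mutually unbiased, which is exactly the statement of the corollary (up to the symmetric renaming of $\mathcal P$ and $\mathcal Q$, which is immaterial since mutual unbiasedness is a symmetric relation).

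There is essentially no obstacle to this argument: all the real work has already been done in Theorem~\ref{thm:main} and Lemma~\ref{lem:rest}. The only thing worth flagging is the cosmetic mismatch between the orderings $B(\mathcal Q, H_k), B(\mathcal P, G_j)$ appearing in Theorem~\ref{thm:main} and $B(\mathcal P, H_k), B(\mathcal Q, G_j)$ appearing in the corollary; this is harmless because the roles of $\mathcal P$ and $\mathcal Q$ are interchangeable in the hypothesis, and one can simply swap their labels before applying the theorem.
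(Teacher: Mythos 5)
Your proposal is correct and matches the paper's intended argument exactly: the corollary is stated without proof precisely because it follows by viewing the Latin squares as quantum Latin squares, invoking Lemma~\ref{lem:rest} to convert left orthogonality into weak orthogonality, and then applying Theorem~\ref{thm:main}. Your remark about the $\mathcal P$/$\mathcal Q$ labelling is also sound, since both left orthogonality and weak orthogonality are symmetric relations (the componentwise inner products are forced to be $0$ or $1$, hence real, so swapping the squares only conjugates them).
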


 So our new construction generalises Beth and Wocjan's in two directions, having two arbitrary families of Hadamards rather than a single fixed Hadamard and quantum Latin squares
rather than Latin squares. The next theorem shows, by explicit example, that the generalisation is strict.

\begin{theorem}\label{thm:ineq}
The pair of mutually unbiased MEBs from Example~\ref{ex:qlsmub}  are inequivalent to any MEBs obtainable by the LBW MEB construction.
\end{theorem}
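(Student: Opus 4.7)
The plan is to isolate an equivalence-invariant of MEBs that every LBW MEB satisfies but the MEB of Example~\ref{ex:qlsmub} does not. Each LBW basis state has the form $B_{ij}=\tfrac{1}{\sqrt n}\sum_k H_{ik}\ket{k}\otimes\ket{\pi_k(j)}$, so it admits a Schmidt decomposition whose left and right Schmidt vectors are both computational basis states (up to phase). I will call an MEB \emph{bi-monomial} if there exist orthonormal bases $\mathcal U,\mathcal V$ of $\mathcal H$ such that every basis state has a Schmidt decomposition of the form $\tfrac{1}{\sqrt n}\sum_k \phi_k\ket{u_k}\otimes\ket{v_{\sigma(k)}}$ with $\ket{u_k}\in\mathcal U$ and $\ket{v_p}\in\mathcal V$. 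Bi-monomiality is invariant under the equivalence~\eqref{eq:eqimeb}: the unitaries $U\otimes V$ merely transport the Schmidt bases to $(U\mathcal U, V\mathcal V)$, and the phases $c_{ij}$ do not touch the Schmidt vectors. Hence, were the MEB of Example~\ref{ex:qlsmub} equivalent to an LBW MEB, it would be bi-monomial.

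Translating this to the associated UEB $\{W_{ij}\}$ with $W_{ij}\ket{k}=H_{ki}\ket{Q_{kj}}$, and using that the Schmidt decomposition of $\ket{W}=\tfrac{1}{\sqrt n}\sum_k\ket{k}\otimes W\ket{k}$ with left basis $\{\ket{u_k}\}$ has right Schmidt vectors $\{W\overline{\ket{u_k}}\}$, bi-monomiality becomes the existence of a single basis $\mathcal U'$ of $\mathcal H$ making every pairwise product $W_{ij}^{-1}W_{i'j'}$ monomial with respect to $\mathcal U'$. I then split these products into two regimes. When $j=j'$ the product is diagonal in the computational basis with entries $(H_{ki'}\overline{H_{ki}})_k$. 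When $j\neq j'$ it has the form $D_1\,C_{jj'}\,D_2$, where $D_1,D_2$ are diagonal and $C_{jj'}$ is the inner-product matrix $(C_{jj'})_{lk}=\braket{Q_{lj}}{Q_{kj'}}$ between the $j$th and $j'$th columns of $\mathcal Q$.

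The first regime should pin $\mathcal U'$ down to the computational basis, up to phases and permutation. The joint eigenspaces of the collection $\{\mathrm{diag}(H_{ki'}\overline{H_{ki}})_k\}_{i,i'}$ are one-dimensional, because two distinct rows $k\neq k'$ of a Hadamard cannot be scalar multiples of one another; so the identity-permutation case of monomiality in $\mathcal U'$ immediately forces each $\ket{u'_p}$ to be a phase multiple of a computational basis state. A non-identity cycle of length $\ell$ in the monomial structure would require the corresponding $\ket{u'_p}$s to be DFT-type combinations of $\ell$ computational basis vectors, and compatibility with the other cube-root diagonals then constrains the relevant rows of $H$ so tightly that they must be proportional, again contradicting the Hadamard property. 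Granting this rigidity, the $j\neq j'$ regime yields the contradiction: choose $j=3$ so that column $j$ of $\mathcal Q$ is a permutation of the computational basis, and $j'=0$ so that column $j'$ contains the genuine superposition $\ket a=\tfrac{1}{\sqrt 3}(\ket 3+\ket 4+i\ket 5)$. Then the column of $C_{jj'}$ corresponding to $\ket a$ has three nonzero entries, so $W_{ij}^{-1}W_{i'j'}=D_1C_{jj'}D_2$ has a column with three nonzero entries and is not monomial in the computational basis, contradicting bi-monomiality.

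The hardest step is the rigidity claim in the third paragraph. Because the Hadamard in Example~\ref{ex:qlsmub} has entries in the finite group of cube roots of unity, the subgroup of the computational torus generated by the diagonals is finite rather than dense, and one cannot simply invoke continuous-group arguments to rule out twisted bases $\mathcal U'$. The work consists of a case analysis on the cycle length $\ell$---where only $\ell\in\{1,2,3\}$ survives the constraint that the $\ell$th roots of a cube root of unity must themselves be cube roots of unity---combined with a finite check against the specific $H$ that no $\ell\in\{2,3\}$ cycle is consistent with the joint monomial structure.
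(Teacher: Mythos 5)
Your setup is sound: bi-monomiality as you define it is an equivalence invariant, LBW MEBs are bi-monomial, the translation to the UEB $\{W_{ij}\}$ with $W_{ij}\ket k = H_{ki}\ket{Q_{kj}}$ is correct, and so are the two regimes of products ($\diag_k(H_{ki'}\overline{H_{ki}})$ when $j=j'$, and $D_1C_{jj'}D_2$ otherwise). But the proof fails at exactly the step you flag as hardest: the rigidity claim is false for the Hadamard of Example~\ref{ex:qlsmub}. Up to the evident typo in its fifth row (which as printed duplicates the fourth and would violate~\eqref{had2}), that $H$ is the character table $F_3\otimes F_3$ of $\mathbb{Z}_3\times\mathbb{Z}_3$, i.e.\ $H_{ki}=\chi_i(k)$ where the columns $\chi_i$ are the nine characters. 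Consequently the regime-1 diagonals $\diag_k(H_{ki'}\overline{H_{ki}})=\diag_k\bigl((\chi_{i'}\chi_i^{-1})(k)\bigr)$ form, as $(i,i')$ varies, precisely the group $G=\{\diag_k(\chi(k))\}$ with $\chi$ ranging over all characters of $\mathbb{Z}_3\times\mathbb{Z}_3$. Every element of $G$ \emph{permutes} the Fourier basis $\ket{u_{\chi'}}=\tfrac{1}{3}\sum_k\chi'(k)\ket k$, since $\diag_k(\chi(k))\,\ket{u_{\chi'}}=\ket{u_{\chi\chi'}}$. So the entire regime-1 family is simultaneously monomial (indeed a family of permutation matrices) with respect to a basis that is not the computational one, realized by $3$-cycles; your claim that an $\ell=3$ cycle forces two rows of $H$ to be proportional is therefore wrong, and the finite check you defer would not produce a contradiction but rather exhibit consistent non-computational bases. (The same phenomenon already occurs for $n=2$: $\diag(1,-1)$ swaps the two Fourier vectors $\tfrac{1}{\sqrt 2}(\ket 0\pm\ket 1)$.)

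Because regime 1 does not pin down $\mathcal U'$, the two-stage architecture of the proof collapses: to finish along these lines you would need to classify \emph{all} bases making $G$ monomial --- unions of $G$-orbits of lines of sizes $1$, $3$ or $9$, which besides the computational and Fourier bases include mixed bases such as $\{\ket j\otimes\ket{f_k}\}$ under $\mathbb{C}^9\cong\mathbb{C}^3\otimes\mathbb{C}^3$ --- and then verify that for each such basis some regime-2 product fails to be monomial with respect to it. Nothing in the proposal carries this out, and it is a substantially larger analysis than the one sketched. The paper's own proof avoids any basis classification: it passes to the UEB $X$, right-composes with $X_{00}$ so that the family contains the identity, and invokes \cite{mypaper1}, by which equivalence to a monomial basis plus containing the identity implies simultaneous monomializability, which in turn forces all $2520$-th powers ($2520=\mathrm{lcm}(1,\dots,9)$) to commute; a single numerically nonzero commutator entry, $\bra 0\bigl[(X'_{06})^{2520},(X'_{07})^{2520}\bigr]\ket 0\neq 0$, then finishes the argument. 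If you want to keep your Schmidt-basis invariant, the natural repair is to replace the rigidity step with a conjugation-invariant obstruction of this kind rather than an attempted classification of admissible bases.
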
 
\begin{proof}
It will be sufficient to prove that one of our MEBs is inequivalent to any obtainable by the LBW MEB construction. Since equivalence of MEBs is the same as equivalence of UEBs we will take the dual approach here (see Section~\ref{sec:dual} below) and prove that the UEB arising from QLS $\mathcal P$ and Hadamard $H$ in Example~\ref{ex:qlsmub}, which we will refer to as $X$,  is inequivalent to any  LBW UEB. 

We will proceed along the same lines as~\cite[Corollary 31]{mypaper1}. Note that LBW UEBs are a restriction to a single fixed Hadamard of shift-and-multiply UEBs. Thus by~\cite[Proposition 30]{mypaper1}, LBW UEBs are \textit{monomial} (meaning each unitary matrix of the basis is the product of a diagonal matrix and a permutation matrix). 

Suppose for a contradiction  that $X$ is equivalent to a monomial basis.  The first matrix of $X$ is as follows:
\begin{equation*}
X_{00}=
\begin{pmatrix}
1&0&0&0&0&0&0&0&0\\
0&0&1&0&0&0&0&0&0\\
0&1&0&0&0&0&0&0&0\\
0&0&0&1&0&0&0&0&0\\
0&0&0&0&0&1&0&0&0\\
0&0&0&0&1&0&0&0&0\\
0&0&0&0&0&0&1&0&0\\
0&0&0&0&0&0&0&0&1\\
0&0&0&0&0&0&0&1&0
\end{pmatrix}
\end{equation*}
$X_{00}$ is self adjoint. We obtain the equivalent  UEB $X'$ by composing all the matrices of $X$ on the right by $X_{00}$. Thus $X'_{00}=\id[9]$.
Now $X'$ contains the identity and is equivalent to a monomial basis so by~\cite[Proposition 26]{mypaper1} $X'$ is \textit{simultaneously monomializable.} (See~\cite[Definition 25]{mypaper1} .  
The least common multiple of $\{1,2,3,4,5,6,7,8,9\}$ is $\mu_9 = 2520$; thus by~\cite[Proposition 28]{mypaper1}  the $2520^{th}$ powers of the elements of $X$ will commute. Now let $\omega=e^{2\pi i/3}$and  consider $X'_{06}$ (left) and $X'_{07}$ (right) below:

\begin{equation*}
\begin{pmatrix}
0&0&0&0&0&0& \frac{1}{\sqrt{3}}&\frac{1}{\sqrt{3}}&\frac{1}{\sqrt{3}}\\
0&0&0&0&0&0& \frac{1}{\sqrt{3}}&\frac{\omega^2}{\sqrt{3}}&\frac{\omega}{\sqrt{3}}\\
0&0&0&0&0&0& \frac{1}{\sqrt{3}}&\frac{\omega}{\sqrt{3}}&\frac{\omega^2}{\sqrt{3}}\\
\frac{1}{\sqrt{3}}&-i\sqrt{\frac{2}{7}}&\sqrt{\frac{2}{3}}&0&0&0&0&0&0\\
\frac{1}{\sqrt{3}}&\frac{-i}{\sqrt{14}}&\frac{-1}{\sqrt{6}}&0&0&0&0&0&0\\
\frac{i}{\sqrt{3}}&\frac{3}{\sqrt{14}}&\frac{i}{\sqrt{6}}&0&0&0&0&0&0\\
0&0&0&1&0&0&0&0&0\\
0&0&0&0&0&1&0&0&0\\
0&0&0&0&1&0&0&0&0
\end{pmatrix}
,
\begin{pmatrix}
0&0&0&0&0&0& \frac{1}{\sqrt{3}}&\frac{1}{\sqrt{3}}&\frac{1}{\sqrt{3}}\\
0&0&0&0&0&0& \frac{\omega^2}{\sqrt{3}}&\frac{\omega}{\sqrt{3}}&\frac{1}{\sqrt{3}}\\
0&0&0&0&0&0& \frac{\omega}{\sqrt{3}}&\frac{\omega^2}{\sqrt{3}}&\frac{1}{\sqrt{3}}\\
-i\sqrt{\frac{2}{7}}&\sqrt{\frac{2}{3}}&\frac{1}{\sqrt{3}}&0&0&0&0&0&0\\
\frac{-i}{\sqrt{14}}&\frac{-1}{\sqrt{6}}&\frac{1}{\sqrt{3}}&0&0&0&0&0&0\\
\frac{3}{\sqrt{14}}&\frac{i}{\sqrt{6}}&\frac{i}{\sqrt{3}}&0&0&0&0&0&0\\
0&0&0&1&0&0&0&0&0\\
0&0&0&0&0&1&0&0&0\\
0&0&0&0&1&0&0&0&0
\end{pmatrix}
\end{equation*}
For a contradiction we now compute the first column first row entry of the  commutator:
\begin{align*}K&:=(X'_{06})^{2520}(X'_{07})^{2520}-(X'_{07})^{2520}(X'_{06})^{2520} \\ \bra 0 K&\ket 0\approx-0.0219 + 0.0252i 
\neq 0
\end{align*}  
Thus $X'$ and therefore $X$ is not equivalent to any monomial basis, and in particular any LBW MEB.
\end{proof}
\section{Mutually unbiased error bases}\label{sec:dual}
Unitary error bases (UEBs) are the mathematical data necessary for protocols such as dense coding and teleportation as well as having important applications to quantum error correction. In this section we explain how the results of this paper can also be described in terms of UEBs via the correspondence between maximally entangled bases in square dimension  and UEBs by introducing the natural concept of mutually unbiased UEBs. 
\begin{definition}[Unitary error basis]
A \textit{unitary error basis} on an $n$-dimensional Hilbert space is a family of $n^{2}$ unitary matrices $U_i$, each of size $n \times n$, such that~\cite{klapp}:
\begin{equation}\label{eqdef:ueb}
 \tr (U^{\dag}_i \circ U_{j})=\delta_{ij}n 
\end{equation}
\end{definition}

Via state-process duality a bijection exists between UEBs and MEBs (See Definition~\ref{def:meb})~\cite{ghosh}. The correspondence is particularly clear diagrammatically.

Given a UEB, $\mathcal A:=\{U_{i}|0<i \leq n^2\}$ and the computational basis $\tinyspider$, we have the corresponding MEB, $\mathcal B :=\{\ket{U_{i}}|0<i \leq n^2\}$   defined as follows (see~\cite{werner2000} Lemma 2):
\begin{equation}\label{uebmeb}
U_i := \quad 
\begin{pic}
\node[morphism,scale=0.7] (U) {$U_i$};
\draw[string] (0,-0.5) to (U.south);
\draw[string] (0,0.5) to (U.north); 
\end{pic}
\quad \leadsto \quad
\frac{1}{\sqrt{n}}\begin{pic}
\node[blackdot] (b) {};
\node (1) at (-0.5,1) {};
\node (2) at (0.5,1) {};
\node[morphism,scale=0.7] (U) at (0.5,0.5) {$U_i$};
\draw[string,in=left,out=down,looseness=1.3] (1.center) to (b.center);
\draw[string,in=up,out=down,looseness=1.3] (2.center) to (U.north);
\draw[string,in=down,out=right,looseness=1.1] (b.center) to (U.south);
\end{pic}
\quad =:\ket{U_i}
\end{equation}
By Equation~\eqref{mes} the condition that the  matrices $U_i$ are unitary means that the states $\ket{U_i}$ are maximally entangled. Under this duality equivalence of MEBs as described by Equation~\ref{eq:eqimeb}, becomes the usual notion of equivalence for UEBs. The fact that the states on the right hand side of Equation~\eqref{uebmeb} are orthonormal follows directly from Equation~\eqref{eqdef:ueb} as follows:
\begin{equation}\label{eq:dual}
\braket{U_i}{U_j}\super{\eqref{uebmeb}}= 
\frac{1}{n}\begin{pic}
\node[blackdot] (b) at (0,-1) {};
\node (1) at (-0.5,0) {};
\node (2) at (0.5,0) {};
\node[morphism,scale=0.7] (U) at (0.5,-0.5) {$U_i$};
\draw[string,in=left,out=down,looseness=1.3] (1.center) to (b.center);
\draw[string,in=up,out=down,looseness=1.3] (2.center) to (U.north);
\draw[string,in=down,out=right,looseness=1.1] (b.center) to (U.south);

\node[blackdot] (b1) at (0,0.75) {};
\node (a1) at (-0.5,0) {};
\node (a2) at (0.5,0) {};
\node[morphism,scale=0.7] (aU) at (0.5,0.25) {$U^\dagger_j$};
\draw[string,in=left,out=up,looseness=1.3] (a1.center) to (b1.center);
\draw[string,in=down,out=up,looseness=1.3] (a2.center) to (aU.south);
\draw[string,in=up,out=right,looseness=1.1] (b1.center) to (aU.north);

\end{pic}
=\frac{1}{n}\tr (U^{\dag}_i \circ U_{j})\super{\eqref{eqdef:ueb}}=\delta_{ij}
\end{equation} 
In this paper the dual MEB constructions of two of the main constructions for UEBs were used. As mentioned above Lemma~\ref{lem:qlsmeb} the QLS MEB\ of that lemma is the dual of the quantum shift-and-multiply error bases of this author's paper with Jamie Vicary~\cite{mypaper1}. The MEB used in Corollary~\ref{corl} is the shift-and-multiply basis introduced by Werner~\cite{werner2001all}. Thus the LBW MEB construction described in Definition~\ref{thm:bw} gives us a family of UEBs strictly contained within Werners construction.

The duality of MEBs and UEBs makes it natural to talk about mutually unbiased unitary error bases. \begin{definition}[Mutually unbiased error bases]
A pair of unitary error bases over a Hilbert space $\mathcal H$ of dimension $n$, ${\mathcal A=\{U_{i}|i \in \{ 0,...,n-1\}\}}$ and ${\mathcal B=\{V_j|j \in \{ 0,...,n-1\}\}}$ are \textit{mutually unbiased} when the following equation holds for all $i,j$:
\begin{equation}
|\tr (U^{\dag}_i \circ V_j)| ^2= \frac{1}{n}
\end{equation}
 \end{definition}
We had two choices in defining mutually unbiased UEBs above, we  used the inner product of Equation~\eqref{eqdef:ueb} to interpret Equation~\eqref{def:mub} of Definition~\ref{def:mub} directly but we could have defined mutually unbiased UEBs to be UEBs with corresponding MEBs that are mutually unbiased. Fortunately it does not matter  as they are equivalent by a similar argument to Equation~\eqref{eq:dual}.

This definition brings up the question of what it may mean for two teleportation protocols to be mutually unbiased, or what kind of error correction could be performed by a pair of mutually unbiased error bases. 

The main result of this paper can now be interpreted as a construction for a pair of mutually unbiased unitary error bases from a pair of weak orthogonal quantum Latin squares. 
\section{Mutually orthogonal quantum Latin squares}\label{sec:moqls}
In this section we introduce the concept of families of orthogonal quantum Latin squares. In their 2004 paper  Beth and Wocjan~\cite{ortho1} introduced the construction of square dimensional MUBs  from orthogonal Latin squares as described in Section~\ref{sec:bw}. They used this construction to improve the known lower bounds for maximal sets of pairwise mutually unbiased bases. A set of \textit{mutually orthogonal Latin squares} (MOLs) is a set of two or more Latin squares that are pairwise  orthogonal. Beth and Wocjan use their construction on a set of $w$ MOLs of size $n\times n$ and give $w+2$ MUBs for dimension $n^2$. The extra two MUBs come from the  two  squares of vectors (which do not satisfy the axioms to be Latin squares, or even  quantum Latin squares) described below:~\footnote{Note that due to the presentation of Beth and Wocjan's construction in Section~\ref{sec:bw}, in which we start by taking the left-conjugate, the left conjugate map must also be applied to these  squares of vectors to recover the ones used by Beth and Wocjan. In addition the second square here only gives a basis using the original Beth-Wocjan method and not the altered version given by definition~\ref{def:bw} (See footnote~\ref{ftn}).}

\begin{itemize}
\item The first is the $n \times n$ grid with the $i^{th}$ row consisting of the repeated entry $\ket i$ for every column.

\item The second is the $n \times n$ grid with
 $\sum_k^{n-1}\ket k$ as every diagonal entry and $0$s elsewhere.  
\end{itemize}
Some thought reveals that although they are not Latin squares, these two squares are left orthogonal to every $n \times n$ Latin square and to each other. Note that the bases obtained from these extra two however are not maximally entangled. The following definition is a natural extension of the concept of sets of MOLs.
\begin{definition}[Mutually weak orthogonal quantum Latin squares]
A set of $w$ quantum Latin squares are \textit{Mutually weak orthogonal quantum Latin squares} (MOQLs) when they are pairwise weak orthogonal.
\end{definition}
There are no generalisations of the two  squares of vectors described above  that would be weak orthogonal to every QLS. However, with a particular set of MOQLs, an analogue of the first vector square above can be found by considering the subspaces spanned by the non-computational basis states. As an example we present a square of vectors that is weak orthogonal   to both of the pair of  weak orthogonal QLSs from Example~\ref{ex:orthoqls}. Again let $\ket i$, $i\in \{0,...,9\}$ be the computational basis states and define the states $\ket a , \ket b , \ket c,\ket \alpha,\ket \beta$ and $\ket \gamma$   as  in Equations~\eqref{eq:a}~\eqref{eq:b}~\eqref{eq:c}~\eqref{eq:abg}~\eqref{eq:abg2} and~\eqref{eq:abg3}. We define the following  square of vectors:
\begin{equation*}\small\arraycolsep=3pt \grid{
 \ket{0}  &  \ket{0}  & \ket{0}  & \ket{0}  & \ket{0}  & \ket{0}  & \ket{\alpha} & \ket{\alpha}  & \ket{\alpha}
\\ \hline
 \ket 1 
&  \ket{1} 
&  \ket{1} &  \ket{1} &  \ket{1} &  \ket{1} &  \ket{\beta} &  \ket{\beta} & \ket{\beta}
\\ \hline
 \ket{2} 
&  \ket{2} 
&  \ket{2} 
&  \ket{2}  &  \ket{2}  &  \ket 2  &  \ket \gamma  &  \ket \gamma  &  \ket \gamma  
\\ \hline
 \ket{a}  &  \ket{a}  &  \ket{a}  &  \ket{3}  &  \ket 3  &  \ket 3  &  \ket 3  &  \ket 3  &  \ket 3 
\\ \hline
\ket b & \ket b & \ket b & \ket 4 & \ket 4 & \ket 4 & \ket 4 & \ket 4 & \ket 4 \\ \hline
\ket c & \ket c & \ket c & \ket 5 & \ket 5 & \ket 5 & \ket 5 & \ket 5 & \ket 5  \\ \hline
 \ket 6 & \ket 6 & \ket 6 & \ket 6 & \ket 6 & \ket 6 &  \ket 6  & \ket 6 & \ket 6  \\ \hline
\ket 7 & \ket 7 &\ket 7 &\ket 7 &\ket 7 &\ket 7 &\ket 7 &\ket 7 &\ket 7 \\ \hline
\ket 8 & \ket 8 &\ket 8 &\ket 8 &\ket 8 &\ket 8 &\ket 8 &\ket 8 &\ket 8}
\end{equation*}
It can be checked that this square is weak orthogonal to $\mathcal P$ and $\mathcal Q$  in Example~\ref{ex:orthoqls}. It is also weak orthogonal to any QLS weak orthogonal  to $\mathcal P$ or $\mathcal Q$. To see this consider that any two weak orthogonal QLSs must have columns that are permutations of each other.

This example relies on the \textit{block-like} structure  of the QLSs in question. Any family of MOQLS\ having a similar structure will admit a similar square of vectors. It is unknown whether all QLSs are of this form, but to the authors knowledge none have been found yet that do not have this structure up to equivalence. 

The lower bound for the number of  MOQLS in dimension $n$ must be at least the lower bound for the number of MOLS, more research is required to say any more than that at this stage.
\section{Conclusion}
In our 2015 paper~\cite{mypaper1} the author together with Jamie Vicary introduced the quantum combinatorial objects of quantum Latin squares and gave  a construction of UEBs using them. In this paper we have built upon that work by introducing mutually orthogonal quantum Latin squares which generalise mutually orthogonal Latin squares, which have been used extensively to derive results in quantum information. As an application we have given a construction for mutually unbiased bases in square dimension which gives  MUBs that are inequivalent to those that can be constructed by any known method. There is the potential for improved bounds on maximal families of MUBs in composite dimensions using the main result of this paper. 
\bibliographystyle{eptcs}
\bibliography{orthoQLSPaperArxiv}

\newpage
\appendix
\section{Categorical quantum mechanics}\label{apx:cqm}
The graphical calculus of categorical quantum mechanics gives us a diagrammatic notation through which certain kind of problems are easier. The results of this paper were all achieved using these high level techniques. 

In order to read these diagrams the first thing to understand is that wires represent Hilbert spaces and boxes between wires are linear maps. We will use the convention that diagrams are read from bottom to top. The composition of linear maps $U$ and $V$ is given by vertical composition and the tensor product is given by horizontal composition. We represent $n$-partite states  by triangles with no wires in and $n$ wires out. Scalars are represented by boxes with no wires in or out and can move freely around the diagram.  Adjoints are given by vertical mirror image, so asymmetry in the boxes representing linear maps is introduced to avoid ambiguity. Thus we have the following diagrammatic rendering of $(U \circ V\ket k)\otimes U^\dag\ket l$:
\begin{equation}\label{eq:diagram}
(U \circ V\ket k)\otimes U^\dag\ket l:=
\begin{pic}
\node (f) [morphism,wedge] at (0,0) {$V$};
\node (g) [morphism,wedge, anchor=south] at ([yshift=0.75cm] f.north) {$U$};
\node (b) [state,black,scale=0.5] at (0,-1.05){$k$};
\draw[string] (g.north) to +(0,0.75) node [above] {};
\draw[string] (f.south) to +(0,-0.75) node [below] {};
\draw[string] (f.north) to node [auto] {} (g.south) {};
\end{pic}
\begin{pic}
\node (f) [morphism,hflip,wedge] at (0,0) {$U$};
\node (g) at ([yshift=0.75cm] f.north) {};
\node (b) [state,black,scale=0.5] at (0,-1.45){$l$};
\draw[string] (g.south) to +(0,1.05) node [above] {};
\draw[string] (f.south) to +(0,-1.15) node [below] {};
\draw[string] (f.north) to node [auto] {} (g.south) {};
\end{pic}
\end{equation}
We will  represent quantum Latin squares as linear maps $\tinymult[lsdot]$ and $\tinymult[lssdot]$, these are obtained from QLSs by having the left input wire represent the columns , and the right input wire represent the rows of the QLS indexed by the computational basis states. So the $(i,j)^{th}$ entry $\ket {Q_{ij}}$, of some QLS $\mathcal Q$, is represented by the following diagram:
\begin{equation*}
\ket {Q_{ij}}:=
\begin{aligned}\begin{tikzpicture}          
          \node[state,black,scale=0.5] (0a) at (-1,-0.5) {$i$};
          \node[state,black,scale=0.5] (0c) at (0,-0.5) {$j$};
          \node[lsdot] (7) at (-0.5,0) {};       
          \node (10a) at (-0.5,0.5) {};       
          \draw[string, out=90, in =180] (0a.center) to (7.center);
          \draw[string, out=0, in=90] (7.center) to (0c.center);
          \draw[string] (7.center) to (10a.center);
\end{tikzpicture}\end{aligned} 
\end{equation*}
The final definition we require is that of a classical structure. \textit{Classical structures} are dagger special frobenius algebras. In $\cat{FHilb}$ given an orthonormal basis $\ket i$, classical structures are equivalent to families of linear maps $\mathcal H^{\otimes s}\rightarrow \mathcal H^{\otimes r}$ for varying $s$ and $r$ (possibly zero) of the following form~\cite{cqm2014}:
\begin{equation*}
    \overbrace{\underbrace{
    \begin{pic}
     \node[blackdot,scale=2] (A){};
     \draw[string,out=190,in=90] (A.center) to (-2,-1);
     \draw[string,out=350,in=90] (A.center) to (2,-1);
     \draw[string,out=190,in=90] (A.center) to (-1.7,-1);
     \draw[string,out=350,in=90] (A.center) to (1.7,-1);  
     \draw[string,out=170,in=270] (A.center) to (-2,1);
     \draw[string,out=10,in=270] (A.center) to (2,1);
     \draw[string,out=170,in=270] (A.center) to (-1.7,1);
     \draw[string,out=10,in=270] (A.center) to (1.7,1); 
     \draw[loosely dotted] (-1.65,0.9) to (1.65,0.9); 
     \draw[loosely dotted] (-1.65,-0.9) to (1.65,-0.9);      
    \end{pic}}_{r}}^{s}
    \quad := \quad
     \sum^{n-1}_{i=0}
    \overbrace{\underbrace{
    \begin{pic}
     \node[state,black,scale=0.5] (A)at (-1.75,0.3){$i$};
     \node[state,hflip,black,scale=0.5] (B)at (-2,-0.3){$i$};
     \node[state,black,scale=0.5] (C)at (1.75,0.3){$i$};
     \node[state,hflip,black,scale=0.5] (D)at (2,-0.3){$i$};
     \node[state,black,scale=0.5] (A')at (-1.25,0.3){$i$};
     \node[state,hflip,black,scale=0.5] (B')at (-1.5,-0.3){$i$};
     \node[state,black,scale=0.5] (C')at (1.25,0.3){$i$};
     \node[state,hflip,black,scale=0.5] (D')at (1.5,-0.3){$i$};
     \draw[string] (A) to (-1.75,1);
     \draw[string] (B) to (-2,-1);
     \draw[string] (C) to (1.75,1);
     \draw[string] (D) to (2,-1);  
     \draw[string] (A') to (-1.25,1);
     \draw[string] (B') to (-1.5,-1);
     \draw[string] (C') to (1.25,1);
     \draw[string] (D') to (1.5,-1); 
     \draw[loosely dotted] (-1,0.1) to (1,0.1); 
     \draw[loosely dotted] (-1.25,-0.1) to (1.25,-0.1);          
    \end{pic}}_{r}}^{s}
  \end{equation*}
Classical structures are thus in one to one correspondence with orthonormal bases. It is standard notation to use different colours to represent different bases. Throughout this paper we use the grey classical structure $\tinyspider$ to represent the computational basis. The following theorem gives us a way to rewrite connected diagrams of classical structures.
\begin{theorem}
[Spider merge theorem]\label{thm:sm}Given a family of linear maps $\tinyspider$: $\mathcal H^{\otimes r} \rightarrow \mathcal H^{\otimes s}$ for varying $r,s \in \mathbb{N}$ the following are equivalent:

\begin{itemize}
  \item $\tinyspider$ is a classical structure  \item any connected tensor diagram of the linear maps  with swap maps and identities is equal to the unique linear map from $\mathcal H^{\otimes r}$ to $\mathcal H^{\otimes s}$   
\end{itemize}
e.g.
\begin{equation}\label{eq:sm}
    \overbrace{\underbrace{
    \begin{pic}
     \node[blackdot,scale=2] (A) at (-0.5,0.25){};
     \node[blackdot,scale=2] (B) at (0.5,-0.25){};
     \draw[string,out=190,in=90] (B.center) to (-2,-1);
     \draw[string,out=350,in=90] (B.center) to (2,-1);
     \draw[string,out=190,in=90] (B.center) to (-1.7,-1);
     \draw[string,out=350,in=90] (B.center) to (1.7,-1);  
     \draw[string,out=170,in=270] (A.center) to (-2,1);
     \draw[string,out=10,in=270] (A.center) to (2,1);
     \draw[string,out=170,in=270] (A.center) to (-1.7,1);
     \draw[string,out=10,in=270] (A.center) to (1.7,1); 
     \draw[loosely dotted] (-1.65,0.9) to (1.65,0.9); 
     \draw[loosely dotted] (-1.65,-0.9) to (1.65,-0.9); 
     \draw[dotted] (-0.1,-0.1) to (0.15,0.1);   
     \draw[string,in=170,out=300] (A.center) to (B.center);  
     \draw[string,in=100,out=340] (A.center) to (B.center);   
    \end{pic}}_{r}}^{s}
    \quad = \quad
    \overbrace{\underbrace{
    \begin{pic}
     \node[blackdot,scale=2] (A){};
     \draw[string,out=190,in=90] (A.center) to (-2,-1);
     \draw[string,out=350,in=90] (A.center) to (2,-1);
     \draw[string,out=190,in=90] (A.center) to (-1.7,-1);
     \draw[string,out=350,in=90] (A.center) to (1.7,-1);  
     \draw[string,out=170,in=270] (A.center) to (-2,1);
     \draw[string,out=10,in=270] (A.center) to (2,1);
     \draw[string,out=170,in=270] (A.center) to (-1.7,1);
     \draw[string,out=10,in=270] (A.center) to (1.7,1); 
     \draw[loosely dotted] (-1.65,0.9) to (1.65,0.9); 
     \draw[loosely dotted] (-1.65,-0.9) to (1.65,-0.9);       
    \end{pic}}_{r}}^{s}
\end{equation}
\end{theorem}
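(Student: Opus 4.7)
The statement is an \emph{if and only if} between (i) the family $\tinyspider$ being a classical structure and (ii) the normal form (``spider merge'') property, so the plan is to prove both directions, with the harder work in the forward direction. Throughout I write $m, u, \Delta, \epsilon$ for the multiplication, unit, comultiplication and counit (the four basic components of the family, given by $r{=}2,s{=}1$, $r{=}0,s{=}1$, $r{=}1,s{=}2$, $r{=}1,s{=}0$ respectively), and $S^{r,s}$ for the canonical $r \to s$ spider $\sum_i \ket{i}^{\otimes s}\bra{i}^{\otimes r}$.

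For the \emph{reverse} direction I would show that the normal form property forces each defining axiom of a classical structure to hold. Each axiom is an equation between two linear maps of the same type $\mathcal{H}^{\otimes r}\to\mathcal{H}^{\otimes s}$, each of which is a connected diagram of the family with swaps and identities, hence by hypothesis both sides equal $S^{r,s}$. Specifically: the specialness equation $m\circ\Delta = \id$ compares two connected $1\to 1$ diagrams; the three Frobenius pieces $(m\otimes\id)(\id\otimes\Delta)$, $\Delta\circ m$, $(\id\otimes m)(\Delta\otimes\id)$ are all connected $2\to 2$ diagrams; commutativity $m\circ\sigma = m$ compares two connected $2\to 1$ diagrams; and unit/counit laws compare connected $1\to 1$ diagrams. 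The dagger condition follows because $(S^{r,s})^\dagger = S^{s,r}$, which is precisely what inverts a diagram vertically. Finally, I would verify directly on the computational basis that the family $S^{r,s}$ defined by $\sum_i \ket{i}^{\otimes s}\bra{i}^{\otimes r}$ is a classical structure (easy: $\Delta\ket{i}=\ket{ii}$, $\epsilon\ket{i}=1$, etc.), which shows the two canonical families agree and that there is no vacuity in the statement.

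For the \emph{forward} direction I would argue by induction on the total number of nodes in a connected diagram $D: \mathcal{H}^{\otimes r}\to\mathcal{H}^{\otimes s}$, aiming to reduce to $S^{r,s}$. The base case is a single node: by cocommutativity/commutativity, swaps on the inputs or outputs of $S^{r,s}$ are absorbed, so a single-node diagram with $r$ inputs and $s$ outputs equals $S^{r,s}$ regardless of how legs are permuted. For the inductive step, pick any internal edge $e$ joining nodes $N_1$ (with $r_1$ inputs, $s_1$ outputs) and $N_2$ (with $r_2$ inputs, $s_2$ outputs). If $e$ is the only edge between them, use the Frobenius and (co)associativity laws to merge $N_1$ and $N_2$ into a single spider with $r_1+r_2-1$ inputs and $s_1+s_2-1$ outputs; the resulting diagram has one fewer node and is still connected, so the induction hypothesis applies. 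If $e$ is one of several parallel edges between $N_1$ and $N_2$, first apply the Frobenius laws to bring all parallel edges into adjacent positions, then repeatedly use specialness $m\circ\Delta = \id$ to collapse pairs of parallel edges into identity wires until only one edge remains, reducing to the previous case; here one also needs commutativity to reorder the legs at each node so the parallel edges are adjacent. Cycles in the diagram are handled the same way: a cycle produces two nodes connected by two routes, which by Frobenius can be rearranged into parallel edges and then collapsed.

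The main obstacle is the inductive step, and specifically the careful bookkeeping that parallel edges and cycles can always be reduced to a chain of applications of specialness and Frobenius without altering the multiset of external legs or losing connectedness. A clean way to handle this rigorously is to show that the axioms of a classical structure imply the ``only connectivity matters'' slogan by establishing three auxiliary lemmas in order: (a) \emph{Associativity generalised}: any binary tree built from $m$ with $r$ leaves equals a fixed multi-leg spider $S^{r,1}$ (and dually for $\Delta$); (b) \emph{Frobenius generalised}: the map obtained by fusing two spiders along a single wire equals the single spider with the combined legs; (c) \emph{Loop removal}: specialness plus (b) lets one delete any loop. Granting (a)--(c), the induction above goes through mechanically. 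The only subtlety is making sure swap maps in the diagram can be absorbed at every step, which is precisely where (co)commutativity is used; once these lemmas are in place the rest is bookkeeping, and a clean reference such as Coecke--Kissinger's graphical calculus text can be cited for the detailed proofs of (a)--(c).
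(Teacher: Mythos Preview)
The paper does not actually prove this theorem: it is stated without proof in Appendix~\ref{apx:cqm} as background material on categorical quantum mechanics, with the implicit reference being to the CQM literature (the appendix cites~\cite{cqm2014} for classical structures). So there is no paper proof to compare against.

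Your sketch is the standard argument one finds in the CQM literature (e.g.\ Coecke--Kissinger), and the overall shape is correct: the reverse direction is immediate because every Frobenius/special/unit/commutativity axiom equates two connected diagrams of the same type, and the forward direction goes by induction on the number of spider nodes, using (co)associativity to normalise trees, Frobenius to fuse along a single wire, specialness to kill extra parallel edges, and (co)commutativity to absorb swaps. One small remark on your reverse direction: you don't need to separately verify that the explicit family $S^{r,s}=\sum_i\ket{i}^{\otimes s}\bra{i}^{\otimes r}$ is a classical structure in order to conclude the implication---the hypothesis already says every connected diagram equals the unique map of its type, and the axioms are exactly such equalities, so you're done; the explicit check is only needed if you also want to exhibit a concrete witness. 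Otherwise your bookkeeping concerns (cycles, parallel edges, connectedness preserved under fusion) are the right ones to flag, and citing a standard reference for the lemmas~(a)--(c) is appropriate given that the paper itself treats the result as known background.
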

Classical structures are also useful for performing linear algebraic operations such as the trace of a linear map, the following diagram shows how this is done:
\begin{equation}\label{eq:trace}
     \text{Trace} (U):= \begin{pic}[xscale=-1]
  \begin{pgfonlayer}{background}
    \node (f) [morphism,wedge] at (0,0) {$U$};
  \end{pgfonlayer}
  \begin{pgfonlayer}{foreground}
    \draw[string] (f.north) to
        +(0,0.75) node [above] {};
    \draw[string] (f.south) to
        +(0,-0.75) node [below] {};
  \end{pgfonlayer}
  \node[blackdot] (A) at (1, 1.75){};
  \node[blackdot] (B) at (1, -1.75){};
  \draw[string] (2,-1) to (2,1);
  \draw[string,out=270,in=180] (0,-1) to (B.center);
  \draw[string,out=270,in=0] (2,-1) to (B.center);
  \draw[string,out=90,in=180] (0,1) to (A.center);
  \draw[string,out=90,in=0] (2,1) to (A.center);
  \end{pic}
    \end{equation}
Classical structures copy the basis states of the corresponding orthonormal basis. 
\begin{equation}\label{eq:copy}
   \begin{pic}
      \node(1)[state,black,scale=0.5] at (0,0){$k$};
      \node(2)[blackdot] at (0,1){};
      \node(3) at (-0.325,2){};
      \node(4) at (0.325,2){};
      \draw[string,out=up,in=down] (1.center) to (2.center){};
      \draw[string,out=left,in=down] (2.center) to (3.center){};
      \draw[string,out=right,in=down] (2.center) to (4.center){};     
   \end{pic}
   \quad = \quad
   \begin{pic}[yscale=-1]
   \node(1)[state,black,scale=0.5]{$k$};
   \draw[string] (1.center) to (0,-2);
   \node(2)[state,black,scale=0.5] at (-0.75,0){$k$};
   \draw[string] (1.center) to (0,-2);
   \draw[string] (2.center) to (-0.75,-2);
   \end{pic}
\end{equation}
If the state $\ket k$ is real valued then the following holds:
\begin{equation}\label{eq:tran}
   \begin{pic}
      \node(1)[state,black,scale=0.5]{$k$};
      \node(2)[blackdot] at (-0.5,1){};
      \node(3) at (-1,-1){};
      \draw[string,out=up,in=right] (1.center) to (2.center){};
      \draw[string,out=left,in=up,looseness=0.65] (2.center) to (3.center){};     
   \end{pic}
   \quad = \quad
   \begin{pic}
   \node(1)[state,black,hflip,scale=0.5]{$k$};
   \draw[string] (1.center) to (0,-2);
   \end{pic}   
\end{equation} 
On the left, the classical structure acts as a transpose which is equal to the adjoint since $\ket k$ is real valued.   
\section{Proof of main theorem}\label{apxprf}
\noindent
\noindent \textbf{Theorem~\ref{thm:main}.}\,
Given two indexed families of $n$ Hadamards $H_k$ and $G_j$ both of size $n \times n$, and  a pair of $n \times n$ weak orthogonal quantum Latin squares $\mathcal P$ and $\mathcal Q$, the bases $B(\mathcal Q, H_k)$ and $B(\mathcal P, G_j)$ are mutually unbiased.
 
\begin{proof}
Let $\mathcal P:=\tinymult[lsdot]$, $\mathcal Q:=\tinymult[lssdot]$ and $\tinyspider$ be the computational basis. By Definition~\ref{def:qlsmeb}
the $(i,j)^{th}$ state of the basis $\mathcal A$ and the $(p,q)^{th}$ state of the basis $\mathcal B$ are as follows:
\begin{align*}
A_{ij} &= \frac{1}{\sqrt{n}}\sum_{k=0}^{n-1}\ket k \otimes \ketbra{P_{kj}}{k}H_j\ket i\\
B_{pq} &= \frac{1}{\sqrt{n}}\sum_{s=0}^{n-1}\ket s \otimes \ketbra{Q_{sq}}{s}G_q\ket p
\end{align*}
Graphically they are as follows:
\begin{equation}\label{eq:AB}A_{ij}:=
\frac{1}{\sqrt{n}} \begin{pic}
          \node (in) at (-0.5,4) {};         
          \node (H)[morphism,wedge,scale=0.5] at(0.25,1.5) {$H_j$};
          \node (b1)[blackdot,scale=1] at (0.25,2) {};
          \node (b2)[lsdot,scale=1.2] at (1,3) {};          
          \node (i)[state,black,scale=0.5] at(0.25,1) {$i$};
          \node (0b) at (0.5,2) {};
          \node (j)[state,black,scale=0.5] at(1.5,2) {$j$};
          \node (out) at (1,4) {}; 
          \draw[string,out=270,in=180,looseness=0.75] (in.center) to (b1.center);           \draw (H.north) to (b1.center);
          \draw[string,out=0,in=180] (b1.center) to (b2.center);
          \draw[string,out=90,in=0] (j.center) to (b2.center);
          \draw[string] (b2.center) to (out.center);
          \draw[string] (i) to (H.south);
\end{pic} \qquad \qquad B_{pq}:=
\frac{1}{\sqrt{n}}\begin{pic}
          \node (in) at (-0.5,4) {};         
          \node (H)[morphism,wedge,scale=0.5] at(0.25,1.5) {$G_q$};
          \node (b1)[blackdot,scale=1] at (0.25,2) {};
          \node (b2)[lssdot,scale=1.2] at (1,3) {};          
          \node (i)[state,black,scale=0.5] at(0.25,1) {$p$};
          \node (0b) at (0.5,2) {};
          \node (j)[state,black,scale=0.5] at(1.5,2) {$q$};
          \node (out) at (1,4) {}; 
          \draw[string,out=270,in=180,looseness=0.75] (in.center) to (b1.center);           \draw (H.north) to (b1.center);
          \draw[string,out=0,in=180] (b1.center) to (b2.center);
          \draw[string,out=90,in=0] (j.center) to (b2.center);
          \draw[string] (b2.center) to (out.center);
          \draw[string] (i) to (H.south);
          \end{pic}
\end{equation} 
$\mathcal P$ and $\mathcal Q$ are weak orthogonal so by Equation~\eqref{eq:lols}, $f$ defined as follows is a function on computational basis states:
\begin{equation}\label{eq:lolss}
\begin{pic}
   \node[whitedot,scale=2](p) {$f$};
   \draw[string] (p.center) to (0,1.2);
   \draw[string] (0.475,0) to (0.475,-1.2);
   \draw[string] (-0.475,0) to (-0.475,-1.2);
\end{pic}
\quad := \quad
\begin{aligned}\begin{pic}
   \node[lsdot] (ls) at (0,-0.25){};  
   \node[lssdot] (lss) at (0,1){};
   \node (b1) at(-0.75,0.375){};
   \node[blackdot,scale=0.8] (b3) at (0.4,1.3){};
   \node[blackdot,scale=0.8] (b5) at (-0.375,1.25){};
   \node[blackdot,scale=0.8] (b4) at (-0.375,-0.5){};
   \draw[string,out=left,in=up] (b5.center) to (b1.center);
   \draw[string, in=right,out=left] (ls.center) to (b4.center);
   \draw[string,out=left,in=down] (b4.center) to (b1.center);
   \draw[string, in=down,out=up] (ls.center) to (lss.center);    
   \draw[string, in=right,out=left] (lss.center) to (b5.center);
   \draw[string, in=left,out=right,looseness=1.5] (lss.center) to (b3.center);   \node (1) at (0.25,-0.8){};
   \draw[string,in=right,out=up] (1.center) to (ls.center);  
   \node (2) at (-0.375,1.8){};
   \draw[string] (2.center) to (b5.center); 
   \node (3) at (1,1.8){};
   \node (4) at (1.25,-0.8){};
   \draw[string,in=right,out=up] (4.center) to (b3.center);
\end{pic}\end{aligned} 
\end{equation}
Since $f$ is a function on basis states, $f(\ket{j,q})$ is a computational basis state, say $\ket t$ i.e.\begin{equation}\label{eq:t}
\begin{pic}
   \node[whitedot,scale=2](p) {$f$};
   \node[state,black,scale=0.5] at (0.475,-1.2){$q$};
   \node[state,black,scale=0.5] at (-0.475,-1.2){$j$};
   \draw[string] (p.center) to (0,1.2);
   \draw[string] (0.475,0) to (0.475,-1.2);
   \draw[string] (-0.475,0) to (-0.475,-1.2);
\end{pic}
=
\begin{pic}
   \node(t)[state,black,scale=0.5]{$t$};
   \draw[string] (t) to +(0,2.4);
\end{pic}
\end{equation}
We are now ready to show that $\mathcal A$ and $\mathcal B$ are mutually unbiased.
\begin{align*}
|\braket{B_{pq}}{A_{ij}}|^2&\super{\eqref{eq:AB}}= \left |
\frac{1}{n}
\begin{pic}
          \node (in) at (-0.5,0) {};         
          \node (H)[morphism,wedge,scale=0.5] at(0.25,-2) {$H_j$};
          \node (b1)[blackdot,scale=1] at (0.25,-1.5) {};
          \node (b2)[lsdot,scale=1.2] at (1,-0.5) {};          
          \node (i)[state,black,scale=0.5] at(0.25,-2.25) {$i$};
          \node (0b) at (0.5,-1.5) {};
          \node (j)[state,black,scale=0.5] at(1.5,-1.5) {$j$};
          \node (out) at (1,0) {}; 
          \draw[string,out=270,in=180,looseness=0.75] (in.center) to (b1.center);           \draw (H.north) to (b1.center);
          \draw[string,out=0,in=180] (b1.center) to (b2.center);
          \draw[string,out=90,in=0] (j.center) to (b2.center);
          \draw[string] (b2.center) to (out.center);
          \draw[string] (i) to (H.south);
         \node (in') at (-0.5,0) {};         
          \node (H')[morphism,wedge,,hflip,scale=0.5] at(0.25,2) {$G_q$};
          \node (b1')[blackdot,scale=1] at (0.25,1.5) {};
          \node (b2')[lssdot,scale=1.2] at (1,0.5) {};          
          \node (i')[state,black,hflip,scale=0.5] at(0.25,2.25) {$p$};
          \node (0b') at (0.5,2) {};
          \node (j')[state,black,hflip,scale=0.5] at(1.5,1.5) {$q$};
          \node (out') at (1,0) {}; 
          \draw[string,out=90,in=180,looseness=0.75] (in'.center) to (b1'.center);           \draw[string] (H'.south) to (b1'.center);
          \draw[string,out=0,in=180] (b1'.center) to (b2'.center);
          \draw[string,out=270,in=0] (j'.center) to (b2'.center);
          \draw[string] (b2'.center) to (out'.center);
          \draw[string] (i') to (H'.north);
          \end{pic}\right |^2
\quad \super{\eqref{eq:tran}}= 
\frac{1}{n^2}\left |
\begin{pic}
          \node (in) at (-0.5,0) {};         
          \node (H)[morphism,wedge,scale=0.5] at(0.25,-2) {$H_j$};
          \node (b1)[blackdot,scale=1] at (0.25,-1.5) {};
          \node (b2)[lsdot,scale=1.2] at (1,-0.5) {};          
          \node (i)[state,black,scale=0.5] at(0.25,-2.25) {$i$};
          \node (0b) at (0.5,-1.5) {};
          \node (j)[state,black,scale=0.5] at(1.5,-2) {$j$};
          \node (out) at (1,0) {}; 
          \draw[string,out=270,in=180,looseness=0.75] (in.center) to (b1.center);           \draw (H.north) to (b1.center);
          \draw[string,out=0,in=180] (b1.center) to (b2.center);
          \draw[string,out=90,in=0] (j.center) to (b2.center);
          \draw[string] (b2.center) to (out.center);
          \draw[string] (i) to (H.south);
          \node (in') at (-0.5,0) {};         
          \node (H')[morphism,wedge,,hflip,scale=0.5] at(0.25,2) {$G_q$};
          \node (b1')[blackdot,scale=1] at (0.25,1.5) {};
          \node (b2')[lssdot,scale=1.2] at (1,0.5) {};          
          \node (i')[state,black,hflip,scale=0.5] at(0.25,2.25) {$p$};
          \node (0b') at (0.5,2) {};
          \node (j')[state,black,scale=0.5] at(2.5,-2) {$q$};
          \node (out') at (1,0) {};
          \node (b) [blackdot] at (1.875,1.25){}; 
          \draw[string,out=90,in=180,looseness=0.75] (in'.center) to (b1'.center);           \draw[string] (H'.south) to (b1'.center);
          \draw[string,out=0,in=180] (b1'.center) to (b2'.center);
          \draw[string,out=90,in=0,looseness=0.6] (j'.center) to (b.center);
          \draw[string,out=180,in=0,looseness=1.2] (b.center) to (b2'.center);                   \draw[string] (b2'.center) to (out'.center);
          \draw[string] (i') to (H'.north);
          \end{pic}\right |^2 
\super{\eqref{eq:t}}=
\frac{1}{n^2}\left | 
\begin{pic}
          \node (in) at (-0.5,0) {};         
          \node (H)[morphism,wedge,scale=0.5] at(-0.75,-2) {$H_j$};  
          \node (b1)[blackdot,scale=1] at (0.25,-1) {};
          \node (b2)[lsdot,scale=1.2] at (1,-0.5) {};          
          \node (i)[state,black,scale=0.5] at(-0.75,-2.25) {$i$};
          \node (0b) at (0.5,-1.5) {};
          \node (j)[state,black,scale=0.5] at(1.5,-2) {$j$};
          \node (out) at (1,0) {};
          \node (d)[blackdot] at (-0.35,2){}; 
          \node (c) [blackdot] at (0.25,1.5){};
          \draw[string,out=270,in=180,looseness=0.75] (in.center) to (b1.center);           \draw[string,in=left,out=up,looseness=0.35] (H.north) to (d.center);
          \draw[string,in=left,out=right] (d.center) to (c.center);
          \draw[string,out=0,in=180] (b1.center) to (b2.center);
          \draw[string,out=90,in=0] (j.center) to (b2.center);
          \draw[string] (b2.center) to (out.center);
          \draw[string] (i) to (H.south);
         \node (in') at (-0.5,0) {};         
          \node (H')[morphism,wedge,,hflip,scale=0.5] at(0.75,2) {$G_q$};
          \node (b1')[blackdot,scale=1] at (0.25,1) {};
          \node (b2')[lssdot,scale=1.2] at (1,0.5) {};          
          \node (i')[state,black,hflip,scale=0.5] at(0.75,2.25) {$p$};
          \node (0b') at (0.5,2) {};
          \node (j')[state,black,scale=0.5] at(2.5,-2) {$q$};
          \node (out') at (1,0) {};
          \node (b) [blackdot] at (1.875,1.25){}; 
          \draw[string,out=90,in=180,looseness=0.75] (in'.center) to (b1'.center);           \draw[string,in=left,out=down] (H'.south) to (c.center);
          \draw[string] (b1'.center) to (c.center);
          \draw[string,out=0,in=180] (b1'.center) to (b2'.center);
          \draw[string,out=90,in=0,looseness=0.6] (j'.center) to (b.center);
          \draw[string,out=180,in=0,looseness=1.2] (b.center) to (b2'.center);            \draw[string] (b2'.center) to (out'.center);
          \draw[string] (i') to (H'.north);
          \end{pic}\right |^2 
\\&\super{\eqref{eq:lolss}}=\quad
\frac{1}{n^2}\left |
\begin{pic}
          \node (H1)[morphism,wedge,scale=0.5] at(-0.75,-2) {$H_j$};
          \node (f)[whitedot,scale=2] at (1,-0.25) {$f$};
          \node (f1) at (1,0.75) {};
          \node (f2) at (0.525,-1.25) {};
          \node (f3) at (1.475,-1.25) {}; 
          \node (f2a) at (0.525,-0.5) {};
          \node (f3a) at (1.475,-0.5) {};                    
          \node (i)[state,black,scale=0.5] at(-0.75,-2.25) {$i$};
          \node (j)[state,black,scale=0.5] at(1.25,-2) {$j$};
          \node (B1)[blackdot] at (-0.35,2){}; 
          \node (H2)[morphism,wedge,,hflip,scale=0.5] at(0.75,2) {$G_q$};
          \node (B2) [blackdot] at (0.25,1.5){};
          \node (p)[state,black,hflip,scale=0.5] at(0.75,2.25) {$p$};
          \node (q)[state,black,scale=0.5] at(2.25,-2) {$q$};
          \draw[string,out=up,in=down] (q.center) to (f3.center);
          \draw[string,out=up,in=down] (j.center) to (f2.center);
          \draw[string] (f3.center) to (f3a.center);                               \draw[string] (f2.center) to (f2a.center);
          \draw[string] (f.center) to (f1.center);
          \draw[string,out=up,in=down] (f1.center) to (B2.center);
          \draw[string,out=right,in=down] (B2.center) to (H2.south);
          \draw[string,out=left,in=right] (B2.center) to (B1.center);
          \draw[string] (H2.north) to (p.center);
          \draw[string,out=left,in=up,looseness=0.35] (B1.center) to (H1.north);
          \draw[string] (H1.south) to (i.center);
\end{pic}\right |^2                     
\super{\eqref{eq:t}}=\quad
\frac{1}{n^2}\left |
\begin{pic}
          \node (H1)[morphism,wedge,scale=0.5] at(-0.75,-2) {$H_j$};
          \node (f)[state,black,scale=0.5] at (1,-0.25) {$t$};
          \node (f1) at (1,0.75) {};                       
          \node (i)[state,black,scale=0.5] at(-0.75,-2.25) {$i$};
          \node (B1)[blackdot] at (-0.35,2){}; 
          \node (H2)[morphism,wedge,,hflip,scale=0.5] at(0.75,2) {$G_q$};
          \node (B2) [blackdot] at (0.25,1.5){};
          \node (p)[state,black,hflip,scale=0.5] at(0.75,2.25) {$p$};
          \draw[string] (f.center) to (f1.center);
          \draw[string,out=up,in=down] (f1.center) to (B2.center);
          \draw[string,out=right,in=down] (B2.center) to (H2.south);
          \draw[string,out=left,in=right] (B2.center) to (B1.center);
          \draw[string] (H2.north) to (p.center);
          \draw[string,out=left,in=up,looseness=0.35] (B1.center) to (H1.north);
          \draw[string] (H1.south) to (i.center);
\end{pic}\right |^2  
\super{\eqref{eq:copy}}=\quad
\frac{1}{n^2}\left |
\begin{pic}
          \node (H1)[morphism,wedge,scale=0.5] at(-0.75,-2) {$H_j$};
          \node (f)[state,black,scale=0.5] at (0.75,1.75) {$t$};
          \node (f1) at (1,0.75) {};                       
          \node (i)[state,black,scale=0.5] at(-0.75,-2.25) {$i$};
          \node (B1)[blackdot] at (-0.35,2){}; 
          \node (H2)[morphism,wedge,,hflip,scale=0.5] at(0.75,2) {$G_q$};
          \node (B2) [state,black,scale=0.5] at (0,1.5){t};
          \node (p)[state,black,hflip,scale=0.5] at(0.75,2.25) {$p$};
          \draw[string] (f.center) to (H2.south);
          \draw[string,out=up,in=right] (B2.center) to (B1.center);
          \draw[string] (H2.north) to (p.center);
          \draw[string,out=left,in=up,looseness=0.35] (B1.center) to (H1.north);
          \draw[string] (H1.south) to (i.center);
\end{pic}\right |^2
\\&\super{\eqref{eq:tran}}=\quad
\frac{1}{n^2}\left |
\begin{pic}
          \node (H1)[morphism,wedge,scale=0.5] at(-0.75,0) {$H_j$};
          \node (f)[state,black,scale=0.5] at (0,-0.25) {$t$};
          \node (i)[state,black,scale=0.5] at(-0.75,-0.25) {$i$};
          \node (H2)[morphism,wedge,,hflip,scale=0.5] at(0,0) {$G_q$};
          \node (B2) [state,black,hflip,scale=0.5] at (-0.75,0.25){$t$};
          \node (p)[state,black,hflip,scale=0.5] at(0,0.25) {$p$};
          \draw[string] (f.center) to (H2.south);
          \draw[string] (H2.north) to (p.center);
          \draw[string] (H1.south) to (i.center);
          \draw[string] (H1.north) to (B2.center);
\end{pic}\right |^2  
\super{\eqref{eq:diagram}}=\frac{1}{n^2}|(H_j^\pdag)_{it}(G_q^\dag)_{tp}|^2\super{\eqref{had1}}=\frac{1}{n^2}1^2=\frac{1}{n^2} \end{align*}
\end{proof}

\section{Quantum Latin square $9 \times 9$ example  MUB}\label{apx:6561}
We now give a sample of the $81$ states of basis $\mathcal A$ and the $81$ states of basis $\mathcal B$ from Example~\ref{ex:qlsmub}, with some calculations of their inner products showing mutual unbiasedness. We give everything in terms of the computational basis states $\ket{i,j}$ such that $i ,j\in \{0,...,n-1\}$. And we define the scalar $\omega:=e^{2\pi i/3}$. Here are some states from $\mathcal A$ and $\mathcal B$:
\begin{align*}
\mathcal A_{74}&= \frac{1}{3}(\ket{0,8}+\omega^2\ket{1,7}+\omega\ket{2,6}+\omega^2\ket{3,2}+\omega\ket{4,1}+\ket{5,0}+\omega\ket{6,5}+\ket{7,4}+\omega^2\ket{8,3})\\
\mathcal A_{46}&=\frac{1}{3}(\frac{\omega}{\sqrt{3}}\ket{0,3}+\frac{\omega^2}{\sqrt{3}}\ket{0,4}+\frac{i}{\sqrt{3}}\ket{0,5}-\omega\sqrt{\frac{2}{7}}\ket{1,3}-\frac{i\omega^2}{\sqrt{14}}\ket{1,4}+\frac{3}{\sqrt{14}}\ket{1,5}+i\omega\sqrt{\frac{2}{3}}\ket{2,3}\\&-\frac{\omega^2}{\sqrt{6}}\ket{2,4}+\frac{i}{\sqrt{6}}\ket{2,5}+\omega^2\ket{3,6}+\omega\ket{4,8}+\ket{5,7}+\frac{1}{\sqrt{3}}\ket{6,0}+\frac{\omega}{\sqrt{3}}\ket{6,1}+\frac{\omega^2}{\sqrt{3}}\ket{6,2}
\\&+\frac{1}{\sqrt{3}}\ket{7,0}+\frac{1}{\sqrt{3}}\ket{7,1}+\frac{1}{\sqrt{3}}\ket{7,2}+\frac{1}{\sqrt{3}}\ket{8,0}+\frac{\omega^2}{\sqrt{3}}\ket{8,1}+\frac{\omega}{\sqrt{3}}\ket{8,2})\\
\mathcal B_{38}&=\frac{1}{3}(\ket{0,7}+\ket{1,8}+\ket{2,6}+\omega\ket{3,4}+\omega\ket{4,5}+\omega\ket{5,3}+\frac{\omega^2}{\sqrt{3}}\ket{6,0}+\frac{1}{\sqrt{3}}\ket{6,1}+\frac{\omega}{\sqrt{3}}\ket{6,2}\\
&+\frac{\omega^2}{\sqrt{3}}\ket{7,0}+\frac{\omega}{\sqrt{3}}\ket{7,1}+\frac{1}{\sqrt{3}}\ket{7,2}+\frac{\omega^2}{\sqrt{3}}\ket{8,0}+\frac{\omega^2}{\sqrt{3}}\ket{8,1}+\frac{\omega^2}{\sqrt{3}}\ket{8,2})\\
\mathcal B_{03}&=\frac{1}{3}(\ket{0,1}+\ket{1,2}+ \ket{2,0}+\ket{3,7}+ \ket{4,8}+\ket{5,6}+ \ket{6,4}+\ket{7,5}+\ket{8,3})  
\end{align*} 
Here are some  calculations for mutual unbiasedness. Note that they all equal $\frac{1}{81}$ as required:
\begin{align*}
|\braket{\mathcal A_{74}}{B_{38}}|^2&=|\frac{1}{9}\omega|^2=\frac{1}{81}\\
|\braket{\mathcal A_{74}}{\mathcal B_{03}}|^2&=|\frac{1}{9}\omega^2|^2=\frac{1}{81}\\
|\braket{\mathcal A_{46}}{\mathcal B_{38}}|^2&=|\frac{1}{9}\left [\frac{1}{3}(\omega^2+\omega+1)+\frac{1}{3}(\omega^2+\omega+1)+\frac{1}{3}(\omega^2+\omega+1)\right ]|^2=\frac{1}{81}\\
|\braket{\mathcal A_{46}}{\mathcal B_{03}}|^2&=|\frac{1}{9}\omega|^2=\frac{1}{81}
\end{align*}
\end{document}